\begin{document}

\title{\vspace*{-1.5cm} Hedging with Temporary Price Impact}

\author{
 Peter Bank\footnote{Technische Universit{\"a}t Berlin, Institut f{\"u}r Mathematik, 
 Stra{\ss}e des 17. Juni 136, 10623 Berlin, Germany, email \texttt{bank@math.tu-berlin.de}. 
 Financial support by Einstein Foundation through project ``Game options
 and markets with frictions'' is gratefully acknowledged.} 
 \hspace{2ex} 
 H. Mete Soner\footnote{ETH Z\"urich, Departement f\"ur Mathematik, R\"amistrasse 101, CH-8092, Z\"urich,
 Switzerland, and Swiss Finance Institute, email \texttt{mete.soner@math.ethz.ch}.} 
 \hspace{2ex} 
 Moritz Vo{\ss}\footnote{Technische Universit{\"a}t Berlin, Institut f{\"u}r Mathematik, 
 Stra{\ss}e des 17. Juni 136, 10623 Berlin, Germany, email \texttt{voss@math.tu-berlin.de}.}
}
\date{\today}

\maketitle
\begin{abstract}
  We consider the problem of hedging a European contingent claim in a
  Bachelier model with temporary price impact as proposed
  by~\citet{AlmgChr:01}. Following the approach of~\citet{RogerSin:10}
  and~\citet{NaujWes:11}, the hedging problem can be regarded as a
  cost optimal tracking problem of the frictionless hedging strategy.
  We solve this problem explicitly for general predictable target
  hedging strategies. It turns out that, rather than towards the
  current target position, the optimal policy trades towards a
  weighted average of expected future target positions. This
  generalizes an observation of~\citet{GarlPeder:13.2} from their
  homogenous Markovian optimal investment problem to a general hedging
  problem. Our findings complement a number of previous studies in the
  literature on optimal strategies in illiquid markets as, e.g.,
  \cite{GarlPeder:13.2}, \cite{NaujWes:11}, \cite{RogerSin:10},
  \cite{AlmgLi:15}, \cite{MorMKSoner:15}, \cite{KallMK:14},
  \cite{GuasoniWeb:15_1}, \cite{GuasoniWeb:15_2}, where the
  frictionless hedging strategy is confined to diffusions. The
  consideration of general predictable reference strategies is made
  possible by the use of a convex analysis approach instead of the
  more common dynamic programming methods.
\end{abstract}

\begin{description}
\item[Mathematical Subject Classification (2010):] 91G10, 91G80,\\
  91B06, 60H30
\item[JEL Classification:] G11, C61
\item[Keywords:] Hedging, illiquid markets, portfolio tracking
\end{description}

\newpage
\section{Introduction}

The construction of effective hedging strategies against financial
risk is one of the key problems in Mathematical Finance. The seminal
work of~\citet{BlacSch:73} and~\citet{Mert:73} showed how this task
can be carried out in an idealized frictionless market by dynamically
trading perfectly liquid assets. However, in recent years there has
been a growing awareness that these idealizations can lead to
misguided hedging strategies with non negligible costs, particularly
when these prescribe a fast reallocation of assets in short periods of
time in the presence of liquidity frictions like price impact. This
has spurred the development of new financial models which take into
account the impact of transactions on execution prices; see, e.g., the
survey by~\citet{GokayRochSoner:11}.

The two most widely used models go back to~\citet{AlmgChr:01} as well
as~\citet{ObiWang:13}, respectively: Loosely speaking, the model of
Almgren and Chriss is characterized by directly specifying functions
describing the temporary and permanent impacts of a given order on the
price. The model of Obizhaeva and Wang assumes that trading takes
place in a block-shaped limit order book with persistent price impact
which is vanishing at a finite resilience rate. As recently discussed
in~\citet{KallMK:14}, the former can be regarded as the
high-resilience limit of the latter.
 
Within these two models, most of the existing literature studies the
problem of optimally liquidating an exogenously given position by some
fixed time horizon (cf., e.g., \citet{AlmgChr:01}, \citet{Almg:03},
\citet{SchSchon:09}, \citet{ObiWang:13}, \citet{AlfFruSch:10} and
\citet{PredShaShr:11}). Further work is also devoted to the more
involved problem of optimal portfolio choice, cf., e.g.,
\citet{GarlPeder:13.1}, \cite{GarlPeder:13.2}, \citet{MorMKSoner:15},
\citet{GuasoniWeb:15_1}, \cite{GuasoniWeb:15_2} and \citet{KallMK:14}.
However, only a few papers directly address the problem of hedging a
contingent claim in the presence of price impact as modeled above,
cf. \citet{RogerSin:10}, \citet{AlmgLi:15}, \citet{GueantPu:15}, and
also \citet{NaujWes:11}.
  
The papers most closely related to ours mathematically are
\citet{RogerSin:10} and \citet{NaujWes:11}. \citet{RogerSin:10}
analyse the problem of hedging a European contingent claim in a
Black-Scholes model in the presence of purely temporary price impact
as in~\citet{AlmgChr:01}.  They relate the hedging problem to a cost
optimal tracking problem of the frictionless Black-Scholes delta
hedge. \citet{NaujWes:11} directly study the problem of optimally
following a given target strategy in an illiquid financial market
under the same type of liquidity costs; see also \citet{CarJai:15} for
a Markovian order flow tracking problem. By contrast to these papers,
we will focus on a non-Markovian setup with general predictable target
strategies.

Instead of the more common dynamic programming methods used in the
papers cited above, our approach is convex analytic along the lines of
Pontryagin's maximum principle.  This allows us to consider general
predictable target strategies and not only continuous diffusion-type
processes. This is particularly important for hedging in illiquid
markets when the frictionless reference hedge portfolio prescribes
sizable instantaneous reallocations as, e.g., already in the case of
discrete Asian options which was not covered by the literature so far.
We derive first order conditions of the considered quadratic
optimization problems which take the form of a linear forward backward
stochastic differential equation (FBSDE). Solutions to these are
explicitly available and give us the optimal frictional hedges.  In
fact, when considered in a Brownian setting, our approach can be
viewed as a special case of the stochastic linear-quadratic control
problem studied by \citet{KohlmannTang:02}. Mathematically, the
novelty of our contribution is the interpretation of the optimal
tracking strategy. Indeed, it turns out that the optimal policy does
\emph{not} instantaneously trade from its current position towards the
current target position but towards a weighted average of
\emph{expected future target positions} which does not occur in the
work of \citet{KohlmannTang:02}. An interesting consequence from a
financial point of view is that this averaging allows us to understand
how singularities in the frictionless reference strategy have to be
addressed in a model with frictions: A singularity in the frictionless
target hedge is smoothed out when averaging the weighted future target
positions which yields sensible hedging strategies for illiquid
markets. Additionally, we also study a constrained version of the
problem where the terminal hedging position is restricted to a certain
exogenously prescribed level. This can be viewed as a way to deal with
physical delivery in derivative contracts. Our explicit solution
reveals how the hedger's focus shifts systematically from tracking the
frictionless target position to attaining the prescribed terminal
position. Here, our convex analytic approach allows us to avoid the
consideration of nonlinear Hamilton-Jacobi-Bellmann equations with
singular terminal conditions and the challenges that these entail. We
also give a sharp characterization of those terminal positions which
can be reached with finite expected trading costs by characterizing
the speed at which the size of these positions is revealed towards the
end.

Conceptually, our result generalizes an observation
by~\citet{GarlPeder:13.2} who consider quadratic utility maximization
in homogeneous Markovian models on an infinite time horizon and
interpret their solution as trading towards an exponentially weighted
average of future expected Markowitz portfolios. A similar
interpretation is given by \citet{NaujWes:11} in their equally
Markovian Example~7.1; see \citet{CarJai:15} for a similarly Markovian
study of tracking of order flow in high-frequency trading. These
strategies as well as ours contrast with strategies targeting the
present frictionless optimum directly, which are considered in many
papers on asymptotically optimal portfolios with small transaction
costs, including \citet{RogerSin:10}, \citet{MorMKSoner:15},
\citet{GuasoniWeb:15_1}, \cite{GuasoniWeb:15_2}, and
\citet{KallMK:14}. In all the literature cited above, the authors
confine consideration to diffusion-type target strategies which, at
least asymptotically, are equivalent to our averaged targets. Our
approach, by contrast, allows one to deal with general predictable
frictionless target strategies and so the examples considered in this
paper include strategies with jumps or even singularities where the
differences between these hedges become apparent.

\citet{AlmgLi:15} study a quite similar hedging problem but they
consider a model with permanent price impact which feeds into their
target strategies via the well-known functions for Black-Scholes
deltas and gammas. Hence, they consider a model where the target
strategy is also affected by the targeting strategy which leads to a
feedback effect that we are disregarding in our problem
formulation. We refer to the introduction in~\citet{RogerSin:10} for
further discussion of this idealization.

The rest of the paper is organized as follows. In
Section~\ref{sec:problem} we introduce the setup and motivate our
problem formulation by following the approach of Rogers and
Singh~\cite{RogerSin:10} (cf. also~\citet{NaujWes:11}). Our main
result is presented in Section~\ref{sec:main} and provides the
explicit solution for a general hedging problem of a European style
option in a Bachelier market model with temporary price impact.
Section~\ref{sec:illustrations} contains some illustrations of optimal
solutions in three examples.  The technical proofs are deferred to
Section \ref{sec:proofs}.


\section{Problem setup and motivation}
\label{sec:problem}

We fix a finite deterministic time horizon $T > 0$, a filtered
probability space $(\Omega,\mathcal{F},(\cF_t)_{0 \leq t \leq T},\PP)$
satisfying the usual conditions of right continuity and completeness
and consider an agent who is trading in a financial market consisting
of a risky asset, e.g., stock. The number of shares the agent holds at
time $t \in [0,T]$ of the risky stock is defined as
\begin{equation}
  \label{eq:portfolio} 
   X^u_t \set x + \int_0^t u_s ds , \quad 0 \leq t \leq T,
\end{equation}
where $x \in \RR$ denotes her given initial holdings. The real-valued
stochastic process $(u_t)_{0 \leq t \leq T}$ represents the agent's
turnover rate, that is, the speed at which the agent trades in the
risky asset. It is assumed to be chosen in the set
\begin{equation*}
  \cU \set \left\{ u : u \text{ progressively measurable s.t. } 
    \mathbb{E} \int_0^T u^2_t dt < \infty \right\}.
\end{equation*}
The square-integrability requirement ensures that the induced
quadratic transaction costs which are levied on the agent's respective
turnover rates due to temporary price impact as in \citet{AlmgChr:01}
are finite.

In such a frictional market, our agent seeks to track a target
strategy which can be thought of, for instance, as a hedging strategy
adopted from a frictionless setting. Mathematically, this problem can
be formalized as follows: Given a real-valued predictable process
$(\xi_t)_{0 \leq t \leq T}$ in $L^2(\PP \otimes dt)$ and a fixed
constant $\kappa > 0$, the agent's objective is to minimize the
performance functional
\begin{equation} \label{eq:objFun}
   J(u) \set \EE \left[ \frac{1}{2} \int_0^T (X^u_t - \xi_t)^2 dt 
   + \frac{1}{2} \kappa \int_0^T u^2_t dt \right].
\end{equation}
This leads to the optimal stochastic control problem
\begin{equation} \label{eq:optProb1}
   J(u) \rightarrow \min_{u \in \cU}. 
\end{equation}
Since the agent's terminal position $X^u_T$ may be important (for here
future plans or physical delivery), we also consider the optimal
stochastic control problem
\begin{equation} \label{eq:optProb2}
   J(u) \rightarrow \min_{u \in \cU^{\Xi}_x}
\end{equation}      
where $\cU^{\Xi}_x$ is the set of constrained policies defined as
\begin{equation*}
  \cU^{\Xi}_x \set \left\{ u : u \in \cU \text{ satisfying } 
    X^u_T \equiv x + \int_0^T u_s ds = \Xi_T \,\, \PP\text{-a.s.} \right\}
\end{equation*}
with predetermined terminal position $\Xi_T \in L^2(\PP,\cF_T)$ such
that
\begin{equation} \label{eq:ccp}
\int_0^T \frac{d\EE[\Xi_t^2]}{T-t} < \infty
\end{equation}
where $\Xi_t \set \EE[\Xi_T \vert \cF_t]$ for $0 \leq t \leq T$.

\begin{Remark}
  \begin{itemize}
  \item[1.)]  Lemma \ref{lem:ccp} below shows that a target $\Xi_T$
    can be reached with finite expected costs in the sense that
    $\cU^{\Xi}_x \neq \varnothing$ if and only if \eqref{eq:ccp} is
    satisfied.  Observe that this condition implies, in particular,
    that $\Xi_T \in \cF_{T-}$.  In fact, \eqref{eq:ccp} can be
    interpreted as a condition on the speed at which one learns about
    the ultimate target position $\Xi_T$ as $t \uparrow T$.
  \item[2.)] Concerning physical delivery at maturity $T$, it would be
    sufficient to impose the constraint $X_T^u \geq \Xi_T$.  However,
    this would lead to an interesting, yet technically rather
    different optimization problem which is left for future research.
  \end{itemize}
\end{Remark}

One motivation of the objective functional in (\ref{eq:objFun}) and
its connection to the problem of hedging a European contingent claim
in the presence of temporary price impact is the following
(cf. also~\citet{RogerSin:10} and~\citet{AlmgLi:15}): Assume the agent
wants to hedge a European-type option with payoff $H$ at maturity $T$
in a market where, for simplicity, interest rates are zero and the
price process $S$ of the underlying risky asset follows a Brownian
motion with volatility $\sigma > 0$:
\begin{equation*}
  S_t = S_0 + \sigma W_t , \quad 0 \leq t \leq T.
\end{equation*}
In a frictionless setting, the payoff $H$ can be perfectly replicated
by a predictable hedging strategy $\xi^H$.  In a market with frictions
where the agent faces liquidity costs as, e.g., in~\citet{AlmgChr:01},
she may be confined to follow strategies $X^u$ as in
(\ref{eq:portfolio}). As a consequence, starting from some initial
wealth $v_0 \in \RR$, her profits and losses from market fluctuations
will incur a risk of deviating from $H$ at maturity $T$ that can be
measured, e.g., by
\begin{equation*}
  \EE\left[ \left( H - (v_0+\int_0^T X^u_t dS_t) \right)^2 \right] = 
  (\EE[H]-v_0)^2 + \EE\left[ \int_0^T (X_t^u - \xi^H_t )^2 \sigma^2 dt \right],
\end{equation*}
see~\citet{FollSond:86}. This deviation can be made arbitrarily small
if the agent is willing to incur arbitrarily high transaction
costs. If, however, she puts a cap $c > 0$ on these she may want to
solve the optimization problem
\begin{equation}
\label{eq:motiv}
\EE\left[ \int_0^T (X_t^u - \xi^H_t )^2 \sigma^2 dt \right] \rightarrow \min_{u \in \cU}
\quad \text{subject to } \mathbb{E} \left[ \int_0^T u^2_t dt \right] \leq c,
\end{equation}
which in its Lagrangian formulation amounts to an objective functional
of the form (\ref{eq:objFun}).

\begin{Remark} 
\label{rem:prob}
\begin{enumerate}
\item A similar hedging problem as formulated in (\ref{eq:objFun}) is
  also studied in \citet{RogerSin:10} and \citet{AlmgLi:15}. In
  contrast to our setting, \citet{RogerSin:10} consider a
  Black-Scholes framework.~\citet{AlmgLi:15} also include permanent
  price impact.

\item Apart from hedging, the minimization problem of the objective in
  (\ref{eq:objFun}) is also related to the problem of optimally
  executing a VWAP order as studied using dynamic programming methods
  in a Markovian setup in \citet{FreiWes:13} and \citet{CarJai:15},
  or, more generally, to the optimal curve following problem as
  discussed in \citet{NaujWes:11} as well as
  \citet{CaiRosenbaumTankov:15}.

\item In a Brownian setting, our problem \eqref{eq:optProb1} is a
  special case of a stochastic linear-quadratic control problem as
  studied, e.g., by \citet{KohlmannTang:02}.
\end{enumerate}
\end{Remark}


\section{Main result}
\label{sec:main}

Our main results are the following explicit descriptions of the
optimal controls for problems (\ref{eq:optProb1}) and
(\ref{eq:optProb2}) and their corresponding minimal costs for which it
is convenient to introduce
$$
\tau^\kappa(t) \set \frac{T-t}{\sqrt{\kappa}} , \quad 0 \leq t \leq T.
$$
\begin{Theorem} \label{thm:main1} The optimal stock holdings $\hat{X}$
  of problem (\ref{eq:optProb1}) with unconstrained terminal position
  satisfy the linear ODE
   \begin{equation} \label{eq:SDEX1}
      d\hat{X}_t = \frac{\tanh (\tau^\kappa(t))}{\sqrt{\kappa}}
      \left(\hat{\xi}_t - \hat{X}_t \right) dt, \quad \hat{X}_0=x,
   \end{equation}
   where, for $0 \leq t < T$, we let
   \begin{equation*}
     \hat{\xi}_t \set
     \EE \left[ \int_t^T \xi_u K(t,u) du \bigg\vert \mathcal{F}_t \right] \quad 
   \end{equation*}
   with the kernel 
   \begin{equation*}
      K(t,u) \set \frac{\cosh(\tau^\kappa(u))}
      {\sqrt{\kappa}\sinh(\tau^\kappa(t))}, \quad 0 \leq t \leq u < T.  
    \end{equation*}
    The minimal costs are given by
    \begin{align} 
      \inf_{u \in \cU} J(u) = 
      & \frac{1}{2} \sqrt{\kappa} \tanh(\tau^\kappa(0)) \left( x - \hat{\xi}_0 \right)^2 +
        \frac{1}{2} \EE \left[ \int_0^T (\xi_t - \hat{\xi}_t)^2 dt \right] \nonumber \\
      & + \frac{1}{2} \EE \left[ \int_0^T \sqrt{\kappa}
        \tanh(\tau^\kappa(t)) d\langle \hat{\xi} \rangle_t \right]
        < \infty. 
    \label{eq:cost1}
    \end{align}
\end{Theorem}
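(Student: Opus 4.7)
The plan is to exploit the strict convexity and coercivity of $J$ on the Hilbert space $\cU$, which reduces the problem to the first-order condition that the Gateaux derivative of $J$ vanish in every admissible direction. For $v\in\cU$, computing $\tfrac{d}{d\epsilon}J(u+\epsilon v)|_{\epsilon=0}$ and swapping the order of integration in the tracking term via Fubini yield
\[
\EE\!\left[\int_0^T v_s\!\left(\kappa u_s+\int_s^T (X^u_t-\xi_t)\,dt\right)ds\right]=0.
\]
Conditioning on $\cF_s$ then gives the necessary condition $\kappa\hat u_s=-Y_s$ with $Y_s\set\EE[\int_s^T(\hat X_t-\xi_t)\,dt\,|\,\cF_s]$. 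This identifies a coupled linear FBSDE: $d\hat X_t=-Y_t/\kappa\,dt$ with $\hat X_0=x$, together with $dY_t=-(\hat X_t-\xi_t)\,dt+dM_t$ and $Y_T=0$ for a suitable martingale $M$.

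I would then solve this FBSDE with the classical LQ ansatz $Y_t=f(t)(\hat X_t-\hat\xi_t)$, $f$ deterministic and $\hat\xi$ predictable. Differentiating and equating with the backward equation, the coefficient of $\hat X_t-\hat\xi_t$ yields the scalar Riccati equation $f'(t)=f(t)^2/\kappa-1$, $f(T)=0$, whose unique solution is $f(t)=\sqrt{\kappa}\tanh(\tau^\kappa(t))$; this already recovers the state-feedback ODE \eqref{eq:SDEX1}. The remaining terms give the linear BSDE $d\hat\xi_t=\tfrac{1}{f(t)}(\hat\xi_t-\xi_t)\,dt+dN_t$ with $N$ a martingale. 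An integrating factor $\mu(t)=\exp(-\int_0^t ds/f(s))$ linearizes this equation; since $\int ds/f(s)=-\log\sinh(\tau^\kappa(t))+\const$, integrating from $t$ to $T$ and taking conditional expectations removes the martingale and, because $\mu(T)=0$, the endpoint term, producing exactly $\hat\xi_t=\EE[\int_t^T K(t,u)\xi_u\,du\,|\,\cF_t]$ with the stated kernel.

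The main technical obstacle is the singularity of $1/f(t)$ at $t=T$: the terminal condition $Y_T=0$ must be attained in a limiting sense, and both admissibility $\hat u\in\cU$ and the vanishing of the boundary term $\mu(T)\hat\xi_T$ above need to be justified from the hypothesis $\xi\in L^2(\PP\otimes dt)$. This is handled through $L^2$ and Jensen/Fubini estimates on the explicit kernel $K(t,u)$, which also yield the required square-integrability of $\hat X-\hat\xi$. Once admissibility of the candidate $\hat u=-Y/\kappa$ is established, strict convexity makes it the unique minimizer of \eqref{eq:optProb1}.

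For the cost expression \eqref{eq:cost1}, I apply It\^o's formula to $\tfrac{1}{2}f(t)Z_t^2$ with $Z_t\set\hat X_t-\hat\xi_t$. Using the Riccati identity, the forward dynamics of $\hat X$, and the BSDE for $\hat\xi$ (noting $d\langle\hat\xi\rangle=d\langle N\rangle$ since the drift of $\hat\xi$ is absolutely continuous) gives
\[
d\bigl(\tfrac{1}{2}f(t)Z_t^2\bigr)=-\bigl[\tfrac{f(t)^2}{2\kappa}Z_t^2+\tfrac{1}{2}Z_t^2+Z_t(\hat\xi_t-\xi_t)\bigr]dt+\tfrac{1}{2}f(t)\,d\langle\hat\xi\rangle_t+d\tilde M_t
\]
for a local martingale $\tilde M$. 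Expanding $(\hat X_t-\xi_t)^2=Z_t^2+2Z_t(\hat\xi_t-\xi_t)+(\hat\xi_t-\xi_t)^2$ and using $\kappa\hat u_t^2=f(t)^2 Z_t^2/\kappa$, the integrand of $J(\hat u)$ becomes $\tfrac{1}{2}(\hat\xi_t-\xi_t)^2\,dt-d(\tfrac{1}{2}fZ^2)_t+\tfrac{1}{2}f(t)\,d\langle\hat\xi\rangle_t$ modulo a martingale. Integration on $[0,T]$ together with $f(T)=0$ and taking expectations deliver the three summands in \eqref{eq:cost1}.
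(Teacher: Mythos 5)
Your proposal is correct and rests on the same backbone as the paper's proof: strict convexity, the G\^ateaux derivative computed via Fubini, and the first-order condition recast as the linear FBSDE \eqref{eq:FBSDE1}. Where you differ is in execution, in two ways that are worth noting. First, you \emph{derive} the solution by decoupling the FBSDE with the LQ ansatz $Y_t=f(t)(\hat X_t-\hat\xi_t)$, solving the Riccati equation $f'=f^2/\kappa-1$, $f(T)=0$, to get $f=\sqrt{\kappa}\tanh(\tau^\kappa(\cdot))$, and then integrating the residual linear BSDE for $\hat\xi$ with the factor $\mu(t)=\sinh(\tau^\kappa(t))/\sinh(\tau^\kappa(0))$ to recover the kernel $K(t,u)$; the paper instead takes \eqref{eq:SDEX1} as a given candidate and verifies the FBSDE by direct differentiation. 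Your route explains where the $\tanh$ feedback gain and the kernel come from, which the verification hides. Second, for the cost you apply It\^o's formula once to the Lyapunov-type quantity $\tfrac12 f(t)(\hat X_t-\hat\xi_t)^2$, which packages in a single identity what the paper obtains from three separate integration-by-parts formulas for $\EE[c(t)\hat X_t^2]$, $\EE[c(t)\hat X_t\hat\xi_t]$ and $\EE[c(t)\hat\xi_t^2]$; the two computations are algebraically equivalent. One caution: in the cost step, killing the boundary term requires $\lim_{t\uparrow T}f(t)\,\EE[(\hat X_t-\hat\xi_t)^2]=0$, and since $\hat\xi_t$ is not a priori bounded in $L^2(\PP)$ near $T$ this does not follow from $f(T)=0$ alone --- it needs the explicit kernel estimate the paper carries out in \eqref{eq6:p:cost1} (using $\sinh(\tau)\geq\tau$ and $\xi\in L^2(\PP\otimes dt)$). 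You flag the analogous issue for $\mu(T)\hat\xi_{T-}$ and for admissibility of $\hat u$, so this is a matter of filling in the same estimates rather than a gap in the argument.
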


For the constrained problem we have similarly:

\begin{Theorem} \label{thm:main2} The optimal stock holdings
  $\hat{X}^{\Xi}$ of problem (\ref{eq:optProb2}) with constrained
  terminal position $\Xi_T \in L^2(\PP,\cF_{T})$ such that
  \eqref{eq:ccp} holds satisfy the linear ODE
   \begin{equation} \label{eq:SDEX2}
      d\hat{X}^{\Xi}_t = \frac{\coth (\tau^\kappa(t))}{\sqrt{\kappa}} 
       \left(\hat{\xi}^{\Xi}_t - \hat{X}^{\Xi}_t \right) dt,
      \quad \hat{X}^{\Xi}_0 = x,
   \end{equation}
   where, for $0 \leq t \leq T$, we let
   \begin{align*}
     \hat{\xi}^{\Xi}_t \set & 
     \EE \left[ \frac{1}{\cosh (\tau^\kappa(t))} \Xi_T 
                        + \left(1-\frac{1}{\cosh (\tau^\kappa(t))}\right)  
     \int_t^T \xi_u K^{\Xi}(t,u) du \bigg\vert \cF_t \right],
   \end{align*}
   with the kernel
   $$
   K^{\Xi}(t,u) \set 
   \frac{\sinh(\tau^\kappa(u))}{\sqrt{\kappa}(\cosh(\tau^\kappa(t))-1)} , 
   \quad 0 \leq t \leq u < T.
   $$ 
   The solution $\hat{X}^\Xi$ of~\eqref{eq:SDEX2} satisfies the
   terminal constraint in the sense that
   $$
   \lim_{t \uparrow T} \hat{X}^{\Xi}_t = \Xi_T \quad \mathbb{P}\text{-a.s.}
   $$
   The minimal costs are given by
    \begin{align} 
      \inf_{u \in \cU^{\Xi}} J(u) = 
      & \frac{1}{2} \sqrt{\kappa} \coth(\tau^\kappa(0)) \left( x - \hat{\xi}^\Xi_0 \right)^2 +
        \frac{1}{2} \EE \left[ \int_0^T (\xi_t - \hat{\xi}^\Xi_t)^2 dt \right] \nonumber \\
      & + \frac{1}{2} \EE \left[ \int_0^T \sqrt{\kappa}
        \coth(\tau^\kappa(t)) d\langle \hat{\xi}^\Xi \rangle_t \right]
        < \infty. 
    \label{eq:cost2}
    \end{align}
\end{Theorem}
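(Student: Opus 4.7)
\textbf{Proof plan for Theorem \ref{thm:main2}.}

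The plan is to apply the convex-analytic / Pontryagin strategy outlined in the introduction, adapted for the affine terminal constraint $X^u_T = \Xi_T$. Strict convexity of $J$ together with convexity of $\cU^\Xi_x$ ensures that, once the admissible set is non-empty (equivalently, \eqref{eq:ccp} holds by Lemma \ref{lem:ccp}), the minimizer $\hat u$ is unique and is characterized by the vanishing of the Gâteaux derivative of $J$ along every admissible variation $v \in \cU$ with $\int_0^T v_s\,ds = 0$ $\P$-a.s.

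Expanding this first-order condition and swapping the order of integration by Fubini yields
\[
\EE \int_0^T v_s \left[\kappa \hat u_s + \int_s^T (\hat X_t - \xi_t)\,dt \right] ds = 0
\]
for all such $v$. Since the constraint is encoded by the single $\cF_T$-measurable linear functional $u \mapsto \int_0^T u_s\,ds$, its $L^2(\P\otimes dt)$-orthogonal complement consists exactly of martingales, so there exists a martingale $M$ with
\[
\kappa \hat u_t + \EE\!\left[\int_t^T (\hat X_s - \xi_s)\,ds \,\bigg|\, \cF_t \right] = M_t.
\]
Together with $d\hat X_t = \hat u_t\,dt$ and the boundary data $\hat X_0 = x$, $\hat X_T = \Xi_T$, this is a linear FBSDE for $(\hat X, \hat u)$.

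To solve it, I would make the affine-feedback ansatz $\kappa \hat u_t = A(t)(\hat\xi^\Xi_t - \hat X_t)$ with $A$ deterministic and $\hat\xi^\Xi$ adapted. Substituting into the FBSDE and matching the $\hat X_t$-coefficient produces the Riccati equation $A'(t) = A(t)^2/\kappa - 1$, whose singular solution $A(t) = \sqrt{\kappa}\coth(\tau^\kappa(t))$ is selected by the terminal constraint $\hat X_T = \Xi_T$, in contrast to the regular solution $\sqrt{\kappa}\tanh(\tau^\kappa(t))$ that arises in Theorem \ref{thm:main1}. Matching the remaining terms produces a linear BSDE for $A(t)\hat\xi^\Xi_t$ which can be integrated against the integrating factor $\sinh(\tau^\kappa(\cdot))$; identifying the martingale part as a conditional expectation reproduces the stated representation of $\hat\xi^\Xi$ with kernel $K^\Xi$ and the $\Xi_T$-weight $1/\cosh(\tau^\kappa(t))$.

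The main obstacle will be the behaviour near maturity, since $\coth(\tau^\kappa(t))/\sqrt{\kappa}$ blows up at $t=T$. One must show that \eqref{eq:SDEX2} is well posed, that $\hat X_t \to \Xi_T$ $\P$-a.s.\ as $t\uparrow T$, and that the induced $\hat u$ lies in $L^2(\P\otimes dt)$. The identity $\hat\xi^\Xi_t - \Xi_t = (1 - 1/\cosh(\tau^\kappa(t)))\EE[\int_t^T (\xi_u - \Xi_t)\,K^\Xi(t,u)\,du \,|\, \cF_t]$ (valid because $\int_t^T K^\Xi(t,u)\,du = 1$) combined with a variation-of-constants representation of $\hat X_t - \Xi_t$ against $\sinh(\tau^\kappa(\cdot))$ reduces these three claims to estimates on the growth of $\EE[\Xi_t^2]$ as $t \uparrow T$; it is precisely here that \eqref{eq:ccp} enters, both as sufficient and, by Lemma \ref{lem:ccp}, as necessary. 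Given $\hat u$, the cost formula \eqref{eq:cost2} follows by substituting back into $J$ and applying Itô to $\sqrt{\kappa}\coth(\tau^\kappa(t))(\hat\xi^\Xi_t - \hat X_t)^2$, which splits the cost into the three additive contributions displayed in the statement; global optimality of $\hat u$ is then a consequence of the strict convexity of $J$.
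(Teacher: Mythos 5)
Your plan is correct and follows essentially the same route as the paper: Gâteaux-derivative first-order conditions against variations with vanishing total turnover, identification of the adjoint martingale (the paper's Lemma \ref{lem:aux}), solution of the resulting linear FBSDE in feedback form with the singular branch $\sqrt{\kappa}\coth(\tau^\kappa(\cdot))$, use of \eqref{eq:ccp} to control $\hat u^\Xi$ near maturity and obtain $\hat X^\Xi_t\to\Xi_T$, and an integration-by-parts computation for \eqref{eq:cost2}. The only cosmetic difference is that you derive the feedback gain from the Riccati equation whereas the paper verifies the stated candidate directly.
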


The convex-analytic proofs of Theorems \ref{thm:main1} and
\ref{thm:main2} are deferred to Section \ref{sec:proofs}.

Note that, rather than towards the \emph{current} target position
$\xi_t$, the optimal frictional hedging rules in (\ref{eq:SDEX1}) and
(\ref{eq:SDEX2}) prescribe to trade towards weighted averages
$\hat{\xi}_t$ and $\hat{\xi}_t^{\Xi}$, respectively, of \emph{expected
  future} target positions of $\xi$. Indeed, for each
$0 \leq t \leq T$, $K(t,.)$ and $K^{\Xi}(t,.)$ specify nonnegative
kernels which integrate to one over $[t,T]$, and so $\hat{\xi}$ and
$\hat{\xi}^{\Xi}$ average out the expected future positions of
$\xi$. For $\hat{\xi}^{\Xi}$ one chooses a convex combination of this
average of $\xi$ with the expected terminal position $\Xi_T$, where
the weight shifts gradually to $\Xi_T$ as $t \uparrow T$ since
$1/\cosh(\tau^k(t)) \uparrow 1$ in that case.

According to \eqref{eq:SDEX1} and \eqref{eq:SDEX2}, the optimal
tracking rate trades towards these targets at a speed proportional to
their distance to the investor's position at any time. The coefficient
of proportionality is controlled by both the cost parameter $\kappa$
and the remaining time-to-maturity $T-t$. For the unconstrained
solution in (\ref{eq:SDEX1}), since
$\lim_{t \uparrow T} \tanh (\tau^{\kappa}(t)) = 0$, trading slows down
when approaching the final time $T$; in other words, towards the end,
the investor does not worry about tracking $\xi$ anymore, but seeks to
minimize trading costs. This becomes intuitive when comparing the
effect of early interventions to later ones: with early interventions
the investor ensures that she stays reasonably close to the target for
the foreseeable future, but late interventions only can impact the
investor's performance for very short periods and therefore do not
warrant, at least asymptotically, the associated costs.  For the
constrained solution in (\ref{eq:SDEX2}) by contrast, we have
$\lim_{t \uparrow T} \coth (\tau^{\kappa}(t)) = +\infty$ and so the
optimal strategy trades with increased urgency towards
$\hat{\xi}^{\Xi}$, which itself is easily seen to converge to the
ultimate target position
$\Xi_T = \lim_{t \uparrow T} \hat{\xi}^{\Xi}_{t}$
$\PP$-a.s. (cf. Proof of Theorem \ref{thm:main2} below in Section
\ref{sec:proofs}).

Our tracking result generalizes an observation from
\citet{GarlPeder:13.2} from their homogeneous Markovian optimal
investment problem to a general hedging problem with a general
predictable target strategy $\xi$, also allowing for a random terminal
portfolio position $\Xi_T$.  It also sheds further light on the
general structure of optimal portfolio strategies in markets with
frictions. Indeed, the description of (asymptotically) optimal trading
strategies obtained in~\citet{MorMKSoner:15}, \citet{KallMK:14},
or~\citet{GuasoniWeb:15_1},~\cite{GuasoniWeb:15_2} prescribe a
reversion towards the frictionless strategy $\xi$ itself, \emph{not}
towards an average such as $\hat{\xi}$ or $\hat{\xi}^{\Xi}$.  For
sufficiently smooth $\xi$, e.g., of diffusion type, this is still
optimal asymptotically for small liquidity costs as then these
averages do not differ significantly from $\xi$. The next section,
however, shows that this is no longer the case when we allow for
singularities in the reference strategy.

Finally, our representations \eqref{eq:cost1} and \eqref{eq:cost2} for
the values of the tracking problems \eqref{eq:optProb1} and
\eqref{eq:optProb2}, respectively, show how these depend on the
initial position $x$ and the $L^2$-distance between the target $\xi$
and the respective signal processes $\hat{\xi}$ and
$\hat{\xi}^\Xi$. It also reveals the importance of the signals'
quadratic variation $\langle \hat{\xi} \rangle$,
$\langle \hat{\xi}^\Xi \rangle$ which can be viewed as a measure for
how effectively one can predict the target positions $\xi$ and
$\Xi_T$. To the best of our knowledge, the key role played by the
signals $\hat{\xi}$, $\hat{\xi}^\Xi$ was not observed in the general
theory of stochastic linear-quadratic control problems as discussed,
e.g., by \citet{KohlmannTang:02}.

\begin{Remark}
  As mentioned in the description of our problem setup in
  Section~\ref{sec:problem}, the quadratic cost term in our objective
  function in \eqref{eq:objFun} is due to linear temporary price
  impact as in the model proposed by \citet{AlmgChr:01}. In this
  regard, one might likewise extend the objective functional also in
  order to account for expected costs resulting from linear
  \emph{permanent price impact} (cf. in \cite{AlmgChr:01}). This would
  lead to the inclusion of the additional term
  \begin{equation} \label{eq:permanent} \EE \left[ \theta \left(
        \int_0^T u_t dt \right)^2 \right] = \theta \EE\left[ (X^u_T -
      x)^2 \right]
  \end{equation}
  for some constant $\theta > 0$. For the constrained problem
  in~\eqref{eq:optProb2}, this extra cost term obviously does not
  depend on the strategy and is thus irrelevant. For the unconstrained
  problem in~\eqref{eq:optProb1}, these extra costs can be regarded as
  a penalization term forcing the final position $X^u_T$ to be close
  to the initial position $x$. For ease of exposition, we refrain in
  the present paper from inducing this additional term, since our main
  intention here is to consider to outline the key role played by the
  optimal tracking signals $\hat{\xi}$, $\hat{\xi}^{\Xi}$ in the
  description of the optimal control as well as the corresponding
  minimal costs. A more general setup allowing for stochastic price
  impact, stochastic volatility and a penalization on the terminal
  position as in~\eqref{eq:permanent} is left for future research.
\end{Remark}


\section{Illustrations}
\label{sec:illustrations}

In this section we present a few case studies illustrating the
structure of the optimal hedging strategies we found in Theorems
\ref{thm:main1} and \ref{thm:main2}. The first two case studies are
simple deterministic toy examples which allow us to understand the
effect of jumps as well as of initial and terminal positions. The
final case study considers a discretely monitored Asian option where
random jumps in the reference hedge occur naturally.

In the first two cases we assume the initial position to be $x=0$ and
consider a time horizon of $T=1$ when, in the constrained case, the
position has to be liquidated, i.e., $\Xi_T=0$.  We depict $\xi$ along
with its averages $\hat{\xi}$ and $\hat{\xi}^{\Xi}$, respectively, as
well as the corresponding optimal frictional hedges $\hat{X}$ and
$\hat{X}^{\Xi}$. We also include a ``myopic'' benchmark strategy
$\tilde{X}$ which directly targets $\xi$ (without final constraint)
given by
\begin{equation*}
  d\tilde{X}_t = \frac{1}{\sqrt{\kappa}} (\xi_t - \tilde{X}_t) dt , \quad 0 \leq t \leq T, 
\end{equation*}
in order to compare with analogous strategies considered in
\citet{RogerSin:10}, \citet{MorMKSoner:15}, \citet{GuasoniWeb:15_1},
\cite{GuasoniWeb:15_2}, and \citet{KallMK:14}.

\subsection{Frictionless deterministic hedge with a jump}

In our first case study we consider a deterministic target strategy
$\xi$ (solid blue line in Figure \ref{fig:ex1}) which can be viewed as
a stock-buying schedule that prescribes to hold one stock until time
$T/2$ when the position is doubled by a jump.
\begin{figure}[ht]
\centering
\includegraphics[scale=.75]{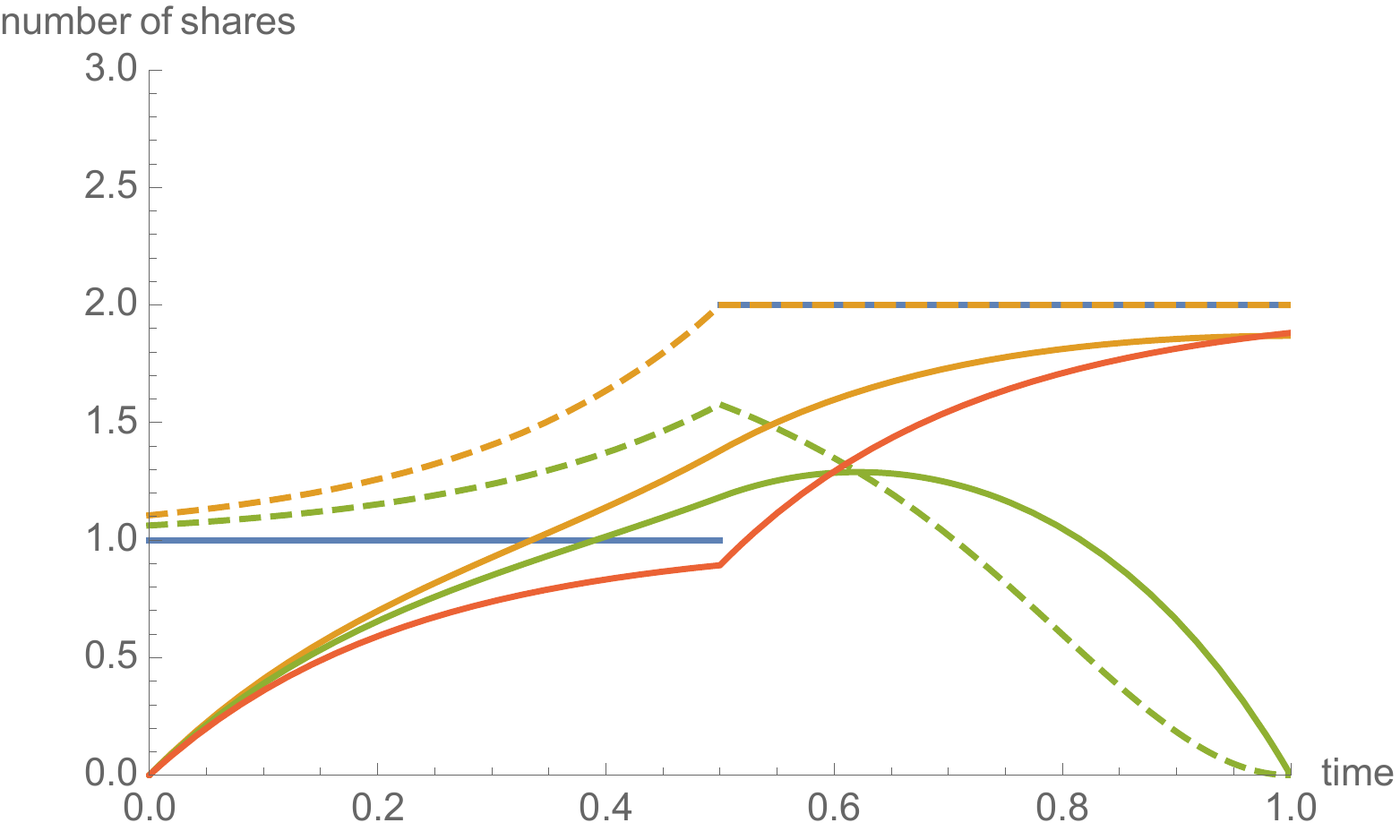}
\caption{Frictionless hedge $\xi$ with a jump at $t=T/2$ (blue),
  corresponding unconstrained (orange, dashed) and constrained (green,
  dashed) targets $\hat{\xi}$ and $\hat{\xi}^{\Xi}$, respectively, as
  well as the corresponding frictional hedges $\hat{X}$ (orange line)
  and $\hat{X}^{\Xi}$ (green line). The myopic benchmark hedge
  $\tilde{X}$ is plotted in red.}
\label{fig:ex1}
\end{figure}
One can observe that the effective target strategies $\hat{\xi}$ and
$\hat{\xi}^{\Xi}$ of the optimal controls $\hat{u}$ and
$\hat{u}^{\Xi}$, respectively, are smoothing out the jump of
$\xi$. The target $\hat{\xi}^{\Xi}$ additionally takes into account
the liquidation constraint $\Xi_T = 0$ of the agent's position until
maturity $T$. As expected, the optimal frictional hedges $\hat{X}$ and
$\hat{X}^{\Xi}$ are indeed anticipating the upward jump of the target
strategy $\xi$ at $t = T/2$ by building up their positions beyond the
actual current position of $\xi$ even before the occurrence of the
jump. This is not the case for the myopic benchmark strategy
$\tilde{X}$ which increases its position much more slowly and exhibits
a kink when the jump occurs after which trading speed picks up
significantly. Finally, the optimal holdings $\hat{X}^{\Xi}$ in the
constrained setting, where the position has to be unwound ultimately,
are decreasing when time approaches maturity and end up in the final
desired position $\hat{X}_T^{\Xi} = 0$.

\subsection{Frictionless deterministic hedge with a singularity}

The second target strategy $\xi$ (solid blue line in Figure
\ref{fig:ex2}) is again deterministic and also exhibits a singularity
midway at $t=T/2$, this time, however, it is a jump from $-\infty$ to
$+\infty$.
\begin{figure}[ht]
\centering
\includegraphics[scale=.75]{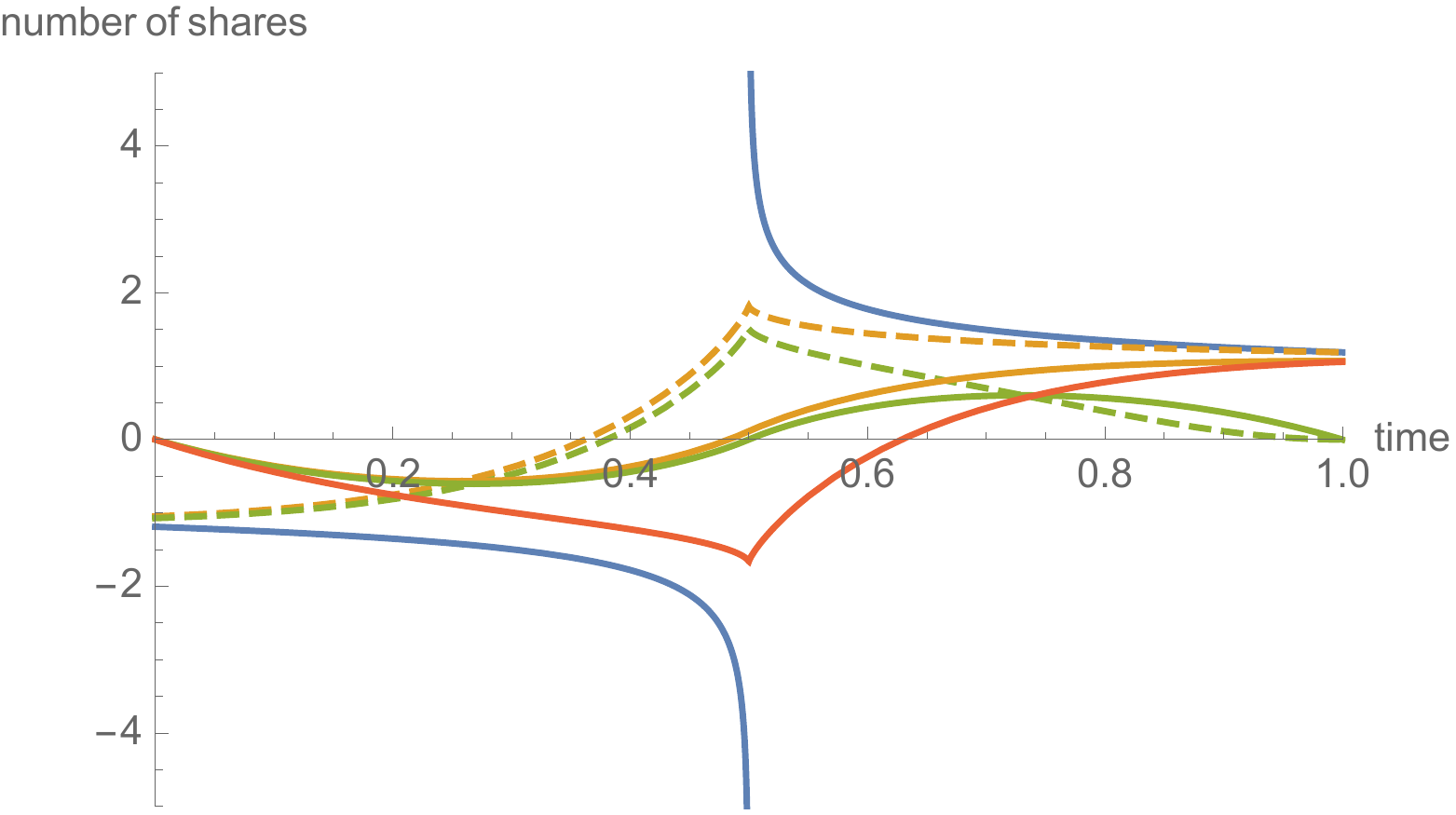}
\caption{Frictionless hedge $\xi$ with a singularity at $t=T/2$
  (blue), corresponding unconstrained (orange, dashed) and constrained
  (green, dashed) targets $\hat{\xi}$ and $\hat{\xi}^{\Xi}$,
  respectively, as well as the corresponding frictional hedges
  $\hat{X}$ (orange line) and $\hat{X}^{\Xi}$ (green line). The myopic
  benchmark hedge $\tilde{X}$ is plotted in red.}
\label{fig:ex2}
\end{figure}
Once more, one can observe that the effective target strategies
$\hat{\xi}$ and $\hat{\xi}^{\Xi}$ of the optimal controls $\hat{u}$
and $\hat{u}^{\Xi}$, respectively, are smoothing out the singularity
of $\xi$. Again, the target $\hat{\xi}^{\Xi}$ additionally takes into
account the liquidation constraint $\Xi_T = 0$ of the agent's position
until maturity $T$. In contrast to the benchmark strategy $\tilde{X}$,
the optimal frictional hedges $\hat{X}$ and $\hat{X}^{\Xi}$ are
anticipating the singularity of the target strategy $\xi$ at $t = T/2$
by gradually building up their positions before the singularity
occurs. Actually, they are trading \emph{away} from the current target
positions of $\xi$ for some time prior to $T/2$. This is in stark
contrast with the myopic benchmark strategy which keeps selling short
more and more intensely even milliseconds before the reference
strategy jumps to $+\infty$.

\subsection{Discrete Asian option}

In this final example we investigate a situation where the target
strategy $\xi$ is stochastic and exhibits a random jump. Specifically,
we consider hedging a discrete Asian call with maturity $T > 0$ in the
Bachelier model where the underlying risky asset $S$ is modeled by a
Brownian motion with volatility $\sigma > 0$:
\begin{equation*}
S_t = S_0 + \sigma W_t , \quad 0 \leq t \leq T.
\end{equation*}
For simplicity, we assume that the average is discretely monitored
over two fixing dates $T/2$ and $T$. That is, the payoff at maturity
$T$ is given by
$$H\set\left(\frac{1}{2}(S_{T/2} + S_T)-K\right)^+$$
for some strike $K \in \RR$.  The Bachelier price of the discrete
Asian option at time $t \in [0,T)$ can be computed as
\begin{align*}
  & \pi_t\set \\
  & \begin{cases}
    \sigma \sqrt{5 T/8 - t} \, \varphi\Big( \frac{S_t-K}{\sigma
      \sqrt{5T/8 - t}} \Big)+ S_t \Phi \Big(
    \frac{S_t-K}{\sigma\sqrt{5T/8 - t}} \Big) , & 0 \leq t < T/2 \\
    \frac{1}{2} \sigma \sqrt{T-t} \, \varphi\Big( \frac{S_{T/2} +
      S_t-2K}{\sigma\sqrt{T-t}} \Big)\\
    \qquad + \left(\frac{1}{2}
      (S_{T/2} + S_t)-K\right) \Phi \Big( \frac{S_{T/2} +
      S_t-2K}{\sigma\sqrt{T - t}} \Big) , & T/2 \leq t < T \\
\end{cases}
\end{align*}
where $\varphi$ and $\Phi$ denote the density and the cumulative
distribution function of the standard normal distribution,
respectively. Accordingly, the frictionless delta-hedging strategy is
$$
\xi_t = 
\begin{cases}
  \Phi \left( \frac{S_t-K}{\sigma\sqrt{5T/8 - t}} \right) , & 0 \leq t
  \leq T/2 \\
  \frac{1}{2} \Phi\left( \frac{S_{T/2} + S_t-2K}{\sigma \sqrt{T - t}}
  \right) , & T/2 < t < T.
\end{cases}
$$
Note that the delta-hedge exhibits a negative random jump at time $T/2$ since
\begin{equation*}
  \xi_{\frac{T}{2}+} - \xi_{\frac{T}{2}-} \set \lim_{t \downarrow
  \frac{T}{2}} \xi_t - \lim_{t \uparrow \frac{T}{2}} \xi_{t} =
-\frac{1}{2} \Phi\left( \frac{S_{T/2}-K}{\sigma \sqrt{T/8}} \right).
\end{equation*}
We assume that the initial position $x$ coincides with the initial
frictionless delta, i.e., e.g., $x=1/2$ in the case of an at-the-money
option with $K=S_0$. This allows us to focus on the hedging
performance itself and avoids distortions from the initial built up of
a sensible hedging position. As before, the terminal position will be
allowed to be either unconstrained or mandating liquidation, i.e.,
$\Xi_T=0$.

The effective targets $\hat{\xi}$ and $\hat{\xi}^{\Xi}$ of the optimal
frictional hedging strategy in (\ref{eq:SDEX1}) and (\ref{eq:SDEX2}),
respectively, can be explicitly computed:
\begin{equation*}
  \hat{\xi}_{t} =
  \begin{cases}
    \Phi\Big( \frac{2(S_t-K)}{\sigma\sqrt{5T/2 - 4t}} \Big) 
    \left( 1 - \frac{1}{2}
      \frac{\sinh(\tau^{\kappa}(T/2))}{\sinh(\tau^{\kappa}(t))}
    \right), & 0 \leq t < T/2, \\
    \xi_t, & T/2 \leq t < T, \\
  \end{cases}
\end{equation*}
and
\begin{equation*}
  \hat{\xi}^{\Xi}_{t} =
  \begin{cases}
    \Phi\Big( \frac{2(S_t-K)}{\sigma\sqrt{5T/2 - 4t}} \Big) \left( 1 -
      \frac{1}{2}
      \frac{\cosh(\tau^{\kappa}(T/2))+1}{\cosh(\tau^{\kappa}(t))} \right),
    & 0 \leq t < T/2 \\
    \left(1-\frac{1}{\cosh(\tau^{\kappa}(t))}\right)\xi_t, & T/2 \leq t <
    T. \\
  \end{cases}
\end{equation*}

Observe that the Bachelier delta-hedge $\xi$ is a martingale on
$[T/2,T]$ and thus the signal $\hat{\xi}$ coincides with it in thi
period. However, the optimal target $\hat{\xi}$ differs from the
frictionless hedge $\xi$ on $[0,T/2]$ since it is anticipating and
systematically smoothing out the random jump at $T/2$ whose size is
determined by the option's moneyness at this point.  The constrained
target $\hat{\xi}^{\Xi}$ anticipates the liquidation requirement at
maturity which plays a more and more dominating role after time $T/2$.

Again, the myopic benchmark strategy 
\begin{equation*}
  d\tilde{X}_t = \frac{\sigma}{\sqrt{\kappa}} (\xi_t - \tilde{X}_t) dt,
 \quad 0 \leq t <T  
\end{equation*}
is not taking into account the random jump at time $T/2$ and keeps on
tracking the frictionless delta-hedge even milliseconds before $T/2$
(see Figure \ref{fig:ex3}).

\begin{figure}[ht]
\centering
\includegraphics[scale=.6]{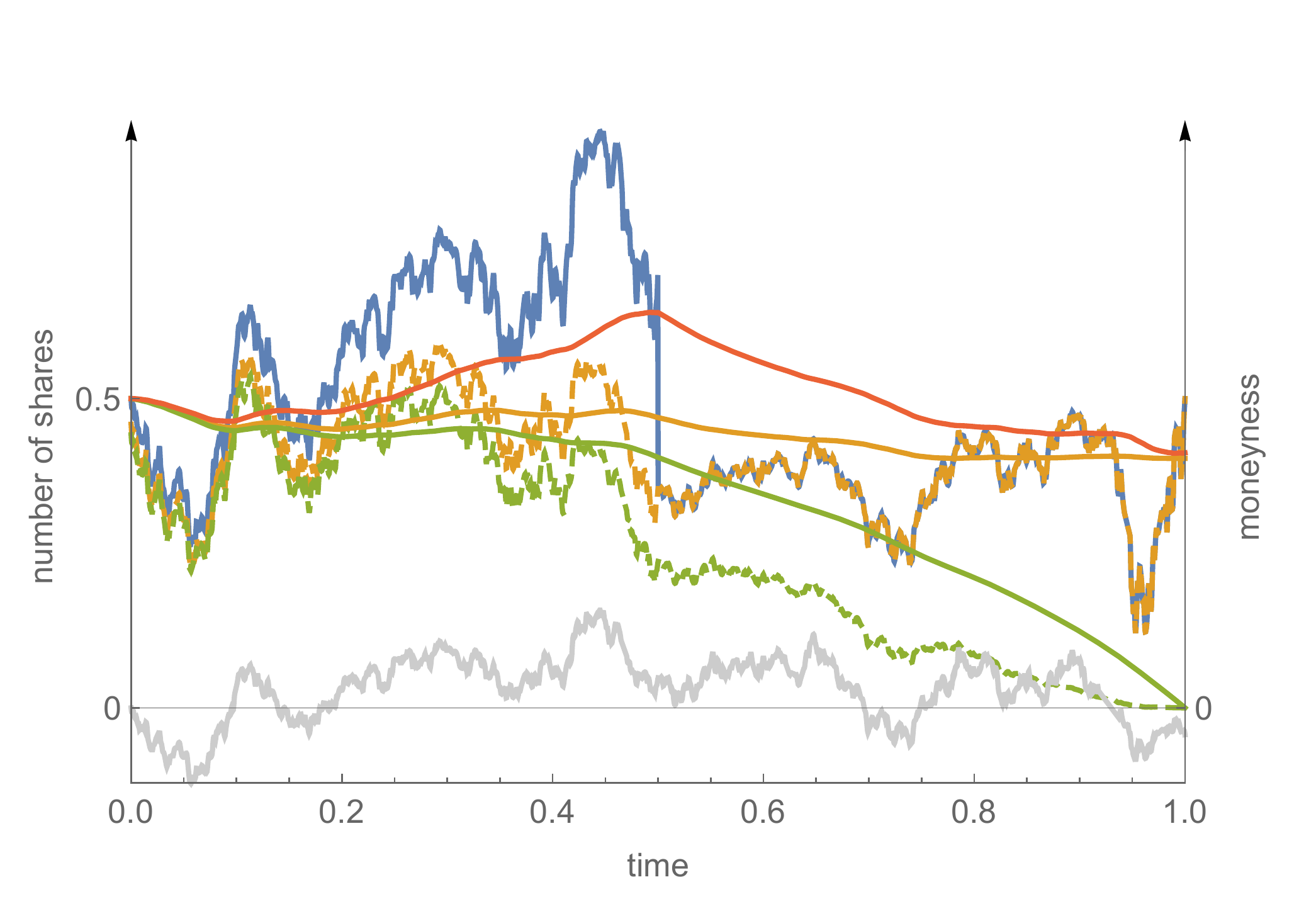}
\caption{
  Frictionless hedge $\xi$ with a jump at $t=T/2$ (blue),
  corresponding unconstrained (orange, dashed) and constrained (green,
  dashed) targets $\hat{\xi}$ and $\hat{\xi}^{\Xi}$, respectively,
  as well as the corresponding frictional hedges $\hat{X}$ (orange
  line) and $\hat{X}^{\Xi}$ (green line). The myopic benchmark hedge
  $\tilde{X}$ is plotted in red. The moneyness is indicated by the
  light gray line.}
\label{fig:ex3}
\end{figure}


\section{Proofs}
\label{sec:proofs}

In order to prove our main Theorems \ref{thm:main1} and
\ref{thm:main2} we use tools from convex analysis. Note that the
performance functional $u \mapsto J(u)$ in (\ref{eq:objFun}) is
strictly convex. Given a control $u \in \cU$ recall the definition of
the G\^ateaux derivative of $J$ at $u$ in the direction of
$w \in L^2(\PP \otimes dt)$:
\begin{equation*}
\langle J'(u), w \rangle \set \lim_{\rho \rightarrow 0} \frac{J(u + \rho w)-J(u)}{\rho}.
\end{equation*}
The following lemma provides an explicit expression for the G\^ateaux
derivative of our performance functional $J$:

\begin{Lemma}
\label{lem:gateaux}
For $u \in \cU$ we have
\begin{equation*}
   \langle J'(u), w \rangle = 
   \EE \left[ \int_0^T w_s \left( \kappa u_s + 
   \int_s^T (X^u_t - \xi_t) dt \right) ds \right]
\end{equation*}
for any $w \in L^2(\PP \otimes dt)$.
\end{Lemma}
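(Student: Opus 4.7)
The plan is to compute the difference quotient $\bigl(J(u+\rho w) - J(u)\bigr)/\rho$ directly, extract the term linear in $\rho$, and then rearrange via Fubini to obtain the stated integral representation.

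First I would exploit the linearity of $u \mapsto X^u$: since $X^{u+\rho w}_t = X^u_t + \rho \int_0^t w_s\,ds$, expanding the two squares in $J(u+\rho w)$ gives
\begin{align*}
J(u+\rho w) - J(u)
& = \rho\, \EE\Bigl[\int_0^T (X^u_t - \xi_t)\!\int_0^t w_s\,ds\,dt + \kappa \int_0^T u_t w_t\,dt\Bigr] \\
& \quad + \tfrac{\rho^2}{2}\, \EE\Bigl[\int_0^T \Bigl(\int_0^t w_s\,ds\Bigr)^{\!2} dt + \kappa \int_0^T w_t^2\,dt\Bigr].
\end{align*}
The $\rho^2$ remainder is finite because $w \in L^2(\PP \otimes dt)$ (with Cauchy--Schwarz bounding $(\int_0^t w_s\,ds)^2 \leq T \int_0^T w_s^2\,ds$), so dividing by $\rho$ and letting $\rho \to 0$ identifies $\langle J'(u),w\rangle$ with the bracketed linear term.

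Next I would apply Fubini's theorem to the first summand to swap the roles of $s$ and $t$:
\begin{equation*}
\EE\int_0^T (X^u_t - \xi_t)\int_0^t w_s\,ds\,dt
= \EE\int_0^T w_s\int_s^T (X^u_t - \xi_t)\,dt\,ds.
\end{equation*}
This is legitimate because $X^u - \xi \in L^2(\PP\otimes dt)$ (note $X^u$ is bounded in $L^2$ via Jensen applied to $u \in \cU$) and $w \in L^2(\PP \otimes dt)$, so the integrand is in $L^1(\PP \otimes ds\,dt)$. Combining the rewritten term with $\kappa \EE\int_0^T u_s w_s\,ds$ and factoring out $w_s$ yields exactly the claimed formula.

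The only real subtlety is the justification of the $L^2$ integrability used both to bound the $\rho^2$ remainder uniformly and to apply Fubini; this follows immediately from $u,w \in L^2(\PP\otimes dt)$, $\xi \in L^2(\PP\otimes dt)$, and the pathwise bound $\sup_{0\leq t\leq T}|X^u_t| \leq |x| + \sqrt{T}\bigl(\int_0^T u_s^2\,ds\bigr)^{1/2}$. No other obstacles arise since $J$ is quadratic, and no delicate stochastic analysis is needed for this Gâteaux computation.
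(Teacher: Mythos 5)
Your proposal is correct and follows essentially the same route as the paper's proof: expand $J(u+\rho w)-J(u)$ using the linearity of $u\mapsto X^u$, identify the term linear in $\rho$ as the G\^ateaux derivative, and apply Fubini's Theorem to rewrite $\EE\int_0^T(X^u_t-\xi_t)\int_0^t w_s\,ds\,dt$ as $\EE\int_0^T w_s\int_s^T(X^u_t-\xi_t)\,dt\,ds$. The only difference is that you spell out the integrability justifications (the $L^2$ bounds on $X^u$ and the finiteness of the $\rho^2$ remainder), which the paper leaves implicit; this is a harmless and indeed welcome addition.
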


\begin{proof}
Let $\rho > 0$, $u \in \cU$ and $w \in L^2(\PP \otimes dt)$. Note that
$X_t^{u+\rho w} = X^u_t + \rho \int_0^t w_s ds$. Then, we have
\begin{align*}
   J(u + \rho w)-J(u) = & \, \rho \EE \left[ \int_0^T \kappa u_t w_t + 
   \left(\int_0^t w_s ds \right) (X^u_t - \xi_t) dt \right] \\
   & + \rho^2 \EE \left[ \frac{\kappa}{2} \int_0^T w^2_t dt + 
   \frac{1}{2} \int_0^T \left(\int_0^t w_s ds \right)^2 dt \right].
\end{align*}
Hence,
\begin{equation*}
   \langle J'(u), w \rangle = 
   \EE \left[ \int_0^T \kappa u_t w_t + 
   \left(\int_0^t w_s ds \right) (X^u_t - \xi_t) dt \right].
\end{equation*}
Note that due to Fubini's Theorem we can write the second part of the
above integral as
\begin{equation*}
   \int_0^T \left(\int_0^t w_s ds \right) (X^u_t - \xi_t) dt =
   \int_0^T \left( \int_s^T (X^u_t - \xi_t) dt \right) w_s ds
\end{equation*}
which finally yields the assertion.
\end{proof}

Let us next derive necessary and sufficient first order conditions for
problems (\ref{eq:optProb1}) and (\ref{eq:optProb2}).

\begin{Lemma}[First order conditions] 
   \label{lem:FOC}
   \begin{enumerate}
      \item In the unconstrained problem (\ref{eq:optProb1}), a control 
      $\hat{u} \in \cU$ with $X\set X^{\hat{u}}$ minimizes the functional 
      $J$ if and only if $X$ satisfies
      \begin{equation} 
      \label{eq:FBSDE1}
         X_0 = x, \quad d\dot{X}_t = \frac{1}{\kappa} (X_t - \xi_t) dt + dM_t 
         \text{ for } 0 \leq t \leq T, \quad \dot{X}_T = 0,
      \end{equation}
      for a suitable square integrable martingale $(M_t)_{0 \leq t \leq T}$.
      \item In the constrained problem (\ref{eq:optProb2}), a control 
      $\hat{u} \in \cU^{\Xi}_x$ with $X\set X^{\hat{u}}$ minimizes the functional 
      $J$ if and only if $X$ satisfies
      \begin{equation} 
      \label{eq:FBSDE2}
         X_0 = x, \quad d\dot{X}_t = \frac{1}{\kappa} (X_t - \xi_t) dt + dM_t 
         \text{ for } 0 \leq t < T, \quad X_T = \Xi_T,
      \end{equation}
      for a suitable square integrable martingale $(M_t)_{0 \leq t < T}$.
\end{enumerate}
\end{Lemma}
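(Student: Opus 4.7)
The plan is to apply the standard convex-analysis criterion: since $J$ is a strictly convex quadratic functional on the Hilbert space $L^2(\PP \otimes dt)$, a feasible $\hat{u}$ is optimal if and only if the G\^ateaux derivative $\langle J'(\hat u), \cdot\rangle$ vanishes on all admissible perturbations. For part~(1) these directions exhaust $L^2(\PP \otimes dt)$, while for part~(2) they form the closed subspace
\[
V := \left\{ w \in L^2(\PP \otimes dt) : \int_0^T w_s\, ds = 0 \ \Pas\right\},
\]
the tangent space along the affine set $\cU^{\Xi}_x$. Set $Z_s := \kappa \hat u_s + \int_s^T (X_t - \xi_t)\,dt$ and $P_s := \condexp[\cF_s]{\int_0^T (X_t - \xi_t)\,dt}$; then $P$ is a square-integrable $(\cF_t)$-martingale and the $\cF_s$-measurability of $w_s$ recasts Lemma~\ref{lem:gateaux} as $\langle J'(\hat u), w\rangle = \EE\bigl[\int_0^T w_s \hat Z_s\, ds\bigr]$, where $\hat Z_s := \condexp[\cF_s]{Z_s} = \kappa \hat u_s + P_s - \int_0^s (X_t - \xi_t)\, dt$.

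For part~(1), vanishing of this expression for arbitrary progressive $w$ is equivalent to $\hat Z \equiv 0$, i.e.\ $\kappa \hat u_s = -P_s + \int_0^s (X_t - \xi_t)\,dt$; differentiating yields \eqref{eq:FBSDE1} with $M := -P/\kappa$, and evaluating the identity at $s=T$ (using $P_T = \int_0^T (X_t-\xi_t)\,dt$) delivers $\dot X_T = 0$. Strict convexity turns this first-order condition into a sufficient one.

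For part~(2), I would extract the martingale property of $\hat Z$ through localized variations: for $0 \leq r < s < T$, $A \in \cF_r$ and small $h>0$, the perturbation $w_u := \mathbf{1}_A\bigl(\mathbf{1}_{[r,r+h]}(u) - \mathbf{1}_{[s,s+h]}(u)\bigr)/h$ lies in $V$ (its $du$-integral vanishes), and as $h \downarrow 0$ the first-order condition $\langle J'(\hat u), w\rangle = 0$ produces $\EE[\mathbf{1}_A(\hat Z_r - \hat Z_s)] = 0$, so $\hat Z$ is an $(\cF_t)$-martingale. Since $\hat Z - P = \kappa \hat u - \int_0^{\cdot}(X_t - \xi_t)\,dt$, the right-hand side is also a martingale, and differentiation gives \eqref{eq:FBSDE2} with $M := (\hat Z - P)/\kappa$; the terminal constraint $X_T = \Xi_T$ is built into $\hat u \in \cU^{\Xi}_x$. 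The main obstacle is to justify that $\hat Z$ is a genuine $L^2$-martingale (rather than merely orthogonal to a small family of test functions) and that $M$ stays square-integrable under \eqref{eq:ccp}; both reduce to the $L^2$-bounds on $P$ and on $\hat u \in \cU$, together with a standard density argument extending the localized variations above to all of $V$.
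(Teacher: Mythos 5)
Your argument for part (1) is the paper's own: project the G\^ateaux derivative onto $\cF_s$, conclude $\kappa\hat u_s=-\EE[\int_s^T(X_t-\xi_t)\,dt\,\vert\,\cF_s]$ a.e., and differentiate. For part (2), sufficiency via the martingale orthogonality $\EE[\int_0^T w_s M_s\,ds]=0$ for $w$ with vanishing time-integral is also the paper's route (its Lemma~\ref{lem:aux}), including the density step you mention: one first takes $w$ supported away from $T$, because the martingale on $[0,T)$ need not be closable at $T$. Where you genuinely diverge is necessity in the constrained case. The paper does \emph{not} argue it directly: it defers to the explicit construction in Theorem~\ref{thm:main2} (which produces a solution of \eqref{eq:FBSDE2}) and invokes uniqueness of the minimizer by strict convexity. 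You instead extract the martingale property of $\hat Z$ directly from the localized zero-mean variations $w=\mathbf{1}_A(\mathbf{1}_{[r,r+h]}-\mathbf{1}_{[s,s+h]})/h$ --- which is exactly the ``only if'' direction of the paper's Lemma~\ref{lem:aux}, there proved under an a priori c\`adl\`ag assumption on the process. This is the one point needing care in your version: an optimal $\hat u\in\cU^{\Xi}_x$ is merely progressively measurable, so $\hat Z$ has no a priori right-continuity and the limit $h\downarrow 0$ identifies $\hat Z_r$ only at Lebesgue points; the correct conclusion is that $\hat Z$ agrees $d\PP\otimes dt$-a.e.\ with a c\`adl\`ag square-integrable martingale on $[0,T)$, which suffices since controls are identified up to $d\PP\otimes dt$-null sets. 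Your closing appeal to ``a standard density argument'' addresses the sufficiency direction, not this regularity issue, so name the Lebesgue-point/modification step explicitly. With that repair your direct route is sound and arguably cleaner than the paper's detour through Theorem~\ref{thm:main2}, at the price of the extra measurability bookkeeping the paper avoids.
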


In other words, the first order conditions in (\ref{eq:FBSDE1}) and
(\ref{eq:FBSDE2}) are taking the form of a coupled linear forward
backward stochastic differential equation (FBSDE) for the pair
$(X,u)$:
\begin{align*}
   dX_t & = u_t dt, \\
   du_t & = \frac{1}{\kappa} (X_t - \xi_t) dt + dM_t,
\end{align*}
with some square integrable martingale $M$ subject to
\begin{equation*}
   X_0=x \text{ and } 
   \begin{cases}
   u_T = 0 & \text{unconstrained case,}\\
   X_T=\Xi_T &\text{constrained case.}
   \end{cases}
\end{equation*}

\begin{proof} \emph{1.)}
We start with the unconstrained problem (\ref{eq:optProb1}). Since we
are minimizing the strictly convex functional $u \mapsto J(u)$ over
$\cU$, a necessary and sufficient condition for the optimality of
$\hat{u} \in \cU$ with corresponding $X^{\hat{u}} = x + \int_0^{\cdot}
\hat{u}_s ds$ is given by
\begin{equation*}
   \langle J'(\hat{u}), w \rangle = 0 
   \text{ for all } w \in \cU
\end{equation*}
(cf., e.g., \citet{EkelTem:99}). In view of Lemma~\ref{lem:gateaux}
this means that $\hat{u} \in \cU$ is optimal if and only if
\begin{equation} \label{eq:FOC1}
   \EE \left[ \int_0^T w_s \left( \kappa \hat{u}_s +  
   \int_s^T (X^{\hat{u}}_t - \xi_t) dt \right) ds \right] = 0
\end{equation}
for all $w \in \cU$. We will now show that the first order condition
in (\ref{eq:FOC1}) is satisfied (i.e., $\hat{u} \in \cU$ is optimal)
if and only if $X^{\hat{u}}$ satisfies the dynamics in
(\ref{eq:FBSDE1}).
 
\emph{Necessity:} Assume that $\hat{u} \in \cU$ with
$X^{\hat{u}} = x + \int_0^{\cdot} \hat{u}_s ds$ minimizes $J$, i.e., 
condition (\ref{eq:FOC1}) is satisfied by
$\hat{u}$. Then, by 
Fubini's Theorem and optional projection, we also get that
$$
\EE \left[ \int_0^T w_s \left( \kappa \hat{u}_s + 
\EE \left[\int_s^T (X^{\hat{u}}_t - \xi_t) dt \bigg\vert \cF_s\right]
\right) ds \right] = 0 
$$
for all $w \in \cU$. However, this is only possible if
\begin{equation} \label{eq:sol1FBSDE1}
   \hat{u}_s = - \frac{1}{\kappa} \EE \left[ \int_s^T (X^{\hat{u}}_t -
     \xi_t) dt \bigg\vert \cF_s \right]
   \quad d\PP \otimes ds \text{-a.e. on } \Omega \times [0,T].
\end{equation}
Hence, by defining the square integrable martingale 
\begin{equation} \label{eq:martFBSDE1}
M_s \set \EE \left[ \int_0^T (X^{\hat{u}}_t - \xi_t) dt \bigg\vert
  \cF_s \right], \quad 0 \leq s \leq T,
\end{equation}
we obtain the representation
\begin{equation} \label{eq:sol2FBSDE1}
   \hat{u}_s = -\frac{1}{\kappa} \left( M_s - \int_0^s (X^{\hat{u}}_t
     - \xi_t) dt \right) \quad d\PP \otimes ds \text{-a.e. on } \Omega
   \times [0,T],
\end{equation}
in other words, $X^{\hat{u}}$ satisfies the dynamics in
(\ref{eq:FBSDE1}). In particular, $X^{\hat{u}}_0 = x$ and
$\dot{X}_T^{\hat{u}} =\hat{u}_T = 0$ $\PP$-a.s.

\emph{Sufficiency:} Assume now that $\hat{u} \in \cU$ with
corresponding $X^{\hat{u}}$ satisfies the dynamics in
(\ref{eq:FBSDE1}) with $X^{\hat{u}}_0 = x$ and $\dot{X}_T^{\hat{u}} =
0$ $\PP$-a.s. Note that the unique strong solution to this linear
FBSDE in (\ref{eq:FBSDE1}) is indeed given by (\ref{eq:sol1FBSDE1})
or, equivalently, by (\ref{eq:sol2FBSDE1}). However, using this
representation of $\hat{u}$ and applying Fubini's Theorem yields
\begin{align*}
   &\EE \left[ \int_0^T w_s \left( \kappa \hat{u}_s +  
   \int_s^T (X^{\hat{u}}_t - \xi_t) dt \right) ds \right] 
   = \EE \left[ \int_0^T w_s \left( M_T - M_s \right) ds  \right] \\
   &= \EE \left[ \int_0^T w_s \EE[ M_T - M_s \vert \cF_s] ds  \right]
     = \int_0^T \EE \left[  w_s \left( \EE[M_T \vert \cF_s] - M_s
     \right) \right] ds
   = 0  
\end{align*}
for all $w \in \cU$, since $M$ is a martingale. Consequently, the
first order condition in (\ref{eq:FOC1}) is satisfied and $\hat{u} \in
\cU$ is optimal.

\emph{2.)} Similar as above, a necessary and sufficient condition  for
the optimality of $\hat{u}^{\Xi} \in \cU^{\Xi}_x$ with corresponding
$X^{\hat{u}^{\Xi}} = x + \int_0^{\cdot} \hat{u}^{\Xi}_s ds$ satisfying
$X^{\hat{u}^{\Xi}}_T = \Xi_T$ $\PP$-a.s. for the constrained problem
(\ref{eq:optProb2}) is given by
\begin{equation*}
   \langle J'(\hat{u}^{\Xi}), w \rangle = 0 
   \text{ for all } w \in \cU^0_0.
\end{equation*}
In contrast to the unconstrained case, observe now that we have an
additional constraint on $w$. Again, in view of
Lemma~\ref{lem:gateaux}, we get that $\hat{u}^{\Xi}\in \cU^{\Xi}_x$ is
optimal if and only if
\begin{equation} \label{eq:FOC2}
   \EE \left[ \int_0^T w_s \left( \kappa \hat{u}^{\Xi}_s +  
   \int_s^T (X^{\hat{u}^{\Xi}}_t - \xi_t) dt \right) ds \right] = 0 
   \text{ for all } w \in \cU^0_0.
\end{equation}

We will now show that the first order condition in 
(\ref{eq:FOC2}) 
is fulfilled (i.e., $\hat{u}^{\Xi} \in \cU^{\Xi}_x$ is optimal) 
if and only if $X^{\hat{u}^{\Xi}}$ satisfies the dynamics in
(\ref{eq:FBSDE2}).
 
\emph{Sufficiency:} Assume that $\hat{u}^{\Xi} \in \cU^{\Xi}_x$ with
corresponding $X^{\hat{u}^{\Xi}}$ satisfies the dynamics in
(\ref{eq:FBSDE2}) with $X_0^{\hat{u}^{\Xi}} = x$ and
$X_T^{\hat{u}^{\Xi}} = \Xi_T$ $\PP$-a.s. That is, we have the
representation
\begin{equation*}
  \hat{u}^{\Xi}_t = \hat{u}^{\Xi}_0 + \frac{1}{\kappa} \int_0^t
  (X^{\hat{u}^{\Xi}}_s - \xi_s) ds + M_t \quad d\PP \otimes dt
  \text{-a.e. on } \Omega \times [0,T)
\end{equation*}
for some square integrable martingale $(M_t)_{0 \leq t < T}$. From
$\hat{u}^{\Xi},\xi \in L^2(\PP \otimes dt)$, it follows that
$\EE[\int_0^T M_s^2 ds] < \infty$. Defining the square
integrable martingale
 \begin{equation*}
   N^{\Xi}_s \set \EE \left[ \int_0^T (X^{ \hat{u}^{\Xi}}_t - \xi_t)
     dt \bigg\vert \cF_s \right], \quad 0 \leq s \leq T, 
\end{equation*}
the above representation of $\hat{u}^{\Xi}$ yields
\begin{align*}
  &\EE \left[ \int_0^T w_s \left( \kappa \hat{u}^{\Xi}_s +  
   \int_s^T (X^{\hat{u}^{\Xi}}_t - \xi_t) dt \right) ds \right] \\
  & = \EE \left[ \int_0^T w_s \left( \kappa \hat{u}^{\Xi}_0 +
     N^{\Xi}_T + \kappa M_s \right) ds  \right] \\
  & = \EE \left[ ( \kappa \hat{u}^{\Xi}_0 +
     N^{\Xi}_T) \int_0^T w_s ds \right] + \kappa \EE \left[ \int_0^T
    w_s M_s ds \right] \\ 
  & = 0 \text{ for all } w \in  \cU^0_0
\end{align*}
by virtue of Lemma~\ref{lem:aux} below. Consequently, the first order
condition in (\ref{eq:FOC2}) is satisfied and $\hat{u}^{\Xi} \in
\cU^{\Xi}_x$ is optimal.

\emph{Necessity:} As shown in the proof of Theorem \ref{thm:main2}
below (which does \emph{not} use the necessity assertion of the
present Lemma), the optimal control $\hat{u}^\Xi$ in \eqref{eq:SDEX2}
satisfies the dynamics in \eqref{eq:FBSDE2}. Moreover, by strict
concavity of the objective functional in \eqref{eq:objFun}, the
solution to problem \eqref{eq:optProb2} is unique. Therefore, the
assertion is indeed necessary.
\end{proof}

The following technical Lemma is needed in the proof of
Lemma~\ref{lem:FOC} for the constrained problem \eqref{eq:optProb1}. 
 
\begin{Lemma}
\label{lem:aux}
Let $M$ be an adapted c\`adl\`ag process on $[0,T)$ with $\EE[\int_0^T
M_s^2 ds] < \infty$. Then, 
\begin{equation}
\label{eq:lemaux}
\EE\left[\int_0^T w_s M_s ds\right] = 0 \text{ for all } 
w \in \cU^0_0
\end{equation}
if and only if $M$ is a square integrable martingale on $[0,T)$.
\end{Lemma}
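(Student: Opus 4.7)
The plan is to handle each direction by a duality pairing based on the auxiliary process $N_t\set\int_0^t w_s\,ds$ attached to $w\in\cU^0_0$, which satisfies $N_0=N_T=0$ $\PP$-a.s., is absolutely continuous of bounded variation, and has $\sup_{0\le t\le T}|N_t|\in L^2(\PP)$.

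For the \emph{sufficiency} direction, I would apply integration by parts to $M_tN_t$ on $[0,T-\epsilon]$: since $N$ is continuous of finite variation, the cross bracket $[M,N]$ vanishes and one obtains $M_{T-\epsilon}N_{T-\epsilon} = \int_0^{T-\epsilon}M_t w_t\,dt + \int_0^{T-\epsilon}N_{t-}\,dM_t$. Because $\sup_t|N_t|\in L^2$ and $M$ is square integrable on $[0,T-\epsilon]$, the stochastic integral is a true martingale, so taking expectations gives $\EE[\int_0^{T-\epsilon}w_tM_t\,dt]=\EE[M_{T-\epsilon}N_{T-\epsilon}]$. The left side converges to $\EE[\int_0^T w_tM_t\,dt]$ by dominated convergence with envelope $\tfrac12(w_t^2+M_t^2)$. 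For the right side, Jensen's bound $N_{T-\epsilon}^2\le\epsilon\int_{T-\epsilon}^T w_s^2\,ds$ combined with Cauchy-Schwarz yields $|\EE[M_{T-\epsilon}N_{T-\epsilon}]|^2\le\epsilon\,\EE[M_{T-\epsilon}^2]\cdot\EE\int_{T-\epsilon}^T w_s^2\,ds$, and the key observation is that $u\mapsto\EE[M_u^2]$ is nondecreasing (as $M$ is a martingale), so $\epsilon\,\EE[M_{T-\epsilon}^2]\le\int_{T-\epsilon}^T\EE[M_u^2]\,du\to 0$ by the integrability hypothesis. Hence the boundary term vanishes and \eqref{eq:lemaux} follows.

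For the \emph{necessity} direction, fix $0\le s<T$ and a bounded $\cF_s$-measurable random variable $A$. For any bounded deterministic $\psi:[s,T)\to\RR$ with $\int_s^T\psi(u)\,du=0$, the process $w_u\set A\,\psi(u)\mathbf{1}_{[s,T)}(u)$ is progressively measurable, square integrable, and has vanishing time integral, hence lies in $\cU^0_0$; the hypothesis \eqref{eq:lemaux} then yields $\int_s^T\psi(u)\,\EE[AM_u]\,du=0$. Varying $\psi$ over this orthogonality constraint forces $u\mapsto\EE[AM_u]$ to be a.e.\ constant on $[s,T)$. To identify this constant with $\EE[AM_s]$ and conclude $\EE[M_u\mid\cF_s]=M_s$, I would average $M_u$ over a shrinking interval to the right of $s$: right-continuity of $M$ gives $\tfrac1\delta\int_s^{s+\delta}M_u\,du\to M_s$ a.s., while $L^2$ uniform integrability of this family holds whenever $s$ is a Lebesgue point of $u\mapsto\EE[M_u^2]$, which happens for a.e.\ $s\in[0,T)$ thanks to $\EE[\int_0^T M_u^2\,du]<\infty$. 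Right-continuity of $M$ then extends the martingale identity from this dense set of times to all of $[0,T)$, completing the proof.

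The main obstacle is the boundary term in the sufficiency argument: since $M$ need not be uniformly $L^2$-bounded as $t\uparrow T$, a naive Cauchy-Schwarz estimate would fail, and it is crucial to exploit the monotonicity of $u\mapsto\EE[M_u^2]$ arising from the martingale property to absorb the $\epsilon^{-1}$ scaling inherent in the short-interval bound on $N_{T-\epsilon}$. This is exactly the technical point that makes the lemma powerful enough to be used as a building block in the necessity part of Lemma~\ref{lem:FOC} where the adjoint process need only be a local martingale on $[0,T)$.
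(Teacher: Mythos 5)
Your argument is correct in substance but follows a genuinely different route from the paper's in both directions. For sufficiency, the paper avoids integration by parts altogether: it first treats $w\in\cU^0_0$ vanishing on $[T-\epsilon,T]$, where Fubini and the optional projection give $\EE[\int_0^Tw_sM_s\,ds]=\EE[M_{T-\epsilon}\int_0^Tw_s\,ds]=0$ directly, and then passes to general $w$ by approximating in $L^2(\PP\otimes dt)$ with controls in $\cU^0_0$ supported away from $T$. Your integration-by-parts route with $N_t=\int_0^tw_s\,ds$ buys you something real: you never have to construct that approximating sequence (which must remain adapted while keeping $\int_0^Tw^{(n)}_s\,ds=0$ — doable but not entirely trivial), and your treatment of the boundary term, absorbing the $\epsilon^{-1}$ from $N_{T-\epsilon}^2\le\epsilon\int_{T-\epsilon}^Tw_s^2\,ds$ via the monotonicity of $t\mapsto\EE[M_t^2]$ and the hypothesis $\EE[\int_0^TM_s^2\,ds]<\infty$, is a genuinely nice estimate. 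The one step you should not wave through is the claim that $\int_0^{T-\epsilon}N_{t-}\,dM_t$ is a true martingale: $\sup_t|N_t|\in L^2(\PP)$ and $\langle M\rangle_{T-\epsilon}\in L^1(\PP)$ do not by themselves make $\EE[\int_0^{T-\epsilon}N_t^2\,d\langle M\rangle_t]$ finite, so you need a localization-and-limit argument (or to restrict first to bounded $w$ and approximate). For necessity, the paper's argument is more direct than yours: it tests against the two-bump controls $w^\epsilon_s=1_A\,\epsilon^{-1}(1_{[t,t+\epsilon]}(s)-1_{[u,u+\epsilon]}(s))$ with $A\in\cF_t$, which immediately yields $\EE[1_A(M_t-M_u)]=0$ after letting $\epsilon\downarrow 0$ by right-continuity, whereas your route through a.e.-constancy of $u\mapsto\EE[AM_u]$ and Lebesgue points reaches the same conclusion with more machinery; both arguments share the same (mild, routinely fixable) uniform-integrability issue when converting a.s.\ convergence of the shrinking averages into convergence of expectations. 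One small slip in your closing remark: the lemma is invoked in the \emph{sufficiency} part of the first-order conditions for the constrained problem, not the necessity part.
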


\begin{proof}
  First, assume that $M$ is a square integrable martingale on $[0,T)$
  with $\EE[\int_0^T M_s^2 ds] < \infty$.  Consider a $w \in \cU^0_0$
  such that $w = 0$ on $\Omega \times [T-\epsilon,T]$ for some
  $\epsilon > 0$. Then, by applying Fubini's Theorem we have
\begin{equation*}
  \EE \left[ \int_0^T w_s M_s ds \right] = \EE \left[
    \int_0^{T-\epsilon} w_s \EE[M_{T-\epsilon} \vert \cF_s] ds \right]
  = \EE \left[ M_{T-\epsilon} \int_0^T w_s ds \right] = 0.
\end{equation*}
Now, let $w \in \cU^0_0$ be arbitrary and consider an approximating
sequence $(w^{(n)})_{n \geq 1} \subset \cU^0_0$ with $w^{(n)} = 0$ on
$\Omega \times [T-\epsilon_n,T]$ for some $\epsilon_n \downarrow 0$
such that $w^{(n)} \rightarrow w$ in $L^2(\Omega \times [0,T],\PP
\otimes dt)$ for $n \rightarrow \infty$. Then, by the Cauchy-Schwarz
inequality we obtain
\begin{equation*}
\lim_{n \rightarrow \infty} \EE \left[ \int_0^T \vert (w^{(n)}_s - w_s) M_s \vert ds\right] = 0. 
\end{equation*}
Consequently, 
\begin{equation*}
  \EE \left[ \int_0^T w_s M_s ds \right] 
  = \lim_{n \rightarrow \infty} \EE \left[ \int_0^T w^{(n)}_s M_s ds \right] = 0,
\end{equation*}
where the last identity follows from our initial consideration for
$w$s with support in $[T-\epsilon,T]$, $\epsilon>0$. Hence, the
condition in (\ref{eq:lemaux}) is satisfied.

Conversely, assume now that the condition in (\ref{eq:lemaux}) is
satisfied. We have to show that $M$ is a square integrable martingale
on $[0,T)$.  Let $0 \leq t < u < T$, $A \in \cF_t$, be
arbitrary. For any $\epsilon > 0$ such that $t+\epsilon,u+\epsilon < T$ we define
\begin{equation*}
  w^{\epsilon}_s(\omega) \set 1_A(\omega) \frac{1}{\epsilon} \left( 1_{[t,t+\epsilon]}(s) - 
    1_{[u,u+\epsilon]}(s) \right) \text{ on } \Omega \times [0,T].
\end{equation*}
Obviously, $w$ is progressively measurable, in $L^2(\PP \otimes ds)$
and satisfies $\int_0^T w_s ds = 0$ $\PP$-a.s. Hence, by assumption
(\ref{eq:lemaux}) we have
\begin{align*}
  0 & = \EE\left [\int_0^T w^\epsilon_s M_s ds \right] = \EE \left[
      1_A \frac{1}{\epsilon} \int_t^{t+\epsilon} M_s ds \right] - \EE
      \left[ 1_A \frac{1}{\epsilon} \int_{u}^{u+\epsilon} M_s ds
      \right]. 
\end{align*}
Passing to the limit $\varepsilon \downarrow 0$, we obtain by
right-continuity of $M$,
\begin{align*}
  0 = \EE \left[ 1_A  (M_t  - M_u) \right] \text{ for all } 0 \leq t <
  u < T.
\end{align*}
Consequently, $M$ is a martingale on $[0,T)$. By assumption, we have
that $\EE[\int_0^T M_s^2 ds] < \infty$ which implies that $M$ is
square integrable on $[0,T)$.
\end{proof}

Now, we are ready to prove our main result by simple verification. We
start with Theorem \ref{thm:main1} for the unconstrained problem
\eqref{eq:optProb1}. 
\medskip

\noindent\textbf{Proof of Theorem~\ref{thm:main1}.} 
We divide the proof in two parts. First, we prove optimality of the
solution given in \eqref{eq:SDEX1}. Then, we compute the corresponding
minimal costs given in \eqref{eq:cost1}.

\emph{Optimality of \eqref{eq:SDEX1}:} In order to show that our
candidate in \eqref{eq:SDEX1} is the optimal solution for problem
\eqref{eq:optProb1}, we need to check the first order condition in
Lemma \ref{lem:FOC} 1.). For this, define the processes
\begin{equation*}
  Y_t \set \int_0^t \xi_s \cosh(\tau^\kappa(s)) ds \text{ and }
  \tilde{M}_t \set \EE [Y_T \vert \cF_t], \quad 0 \leq t \leq T.
\end{equation*}
Since $Y_T \in L^2(\PP)$, we have that $(\tilde{M}_t)_{0 \leq t \leq
  T}$ is a square integrable martingale. Moreover, note that
$Y,\tilde{M} \in L^2(\PP \otimes dt)$. Hence, the process $\hat{\xi}$
in Theorem \ref{thm:main1} can be written as
\begin{equation} \label{eq:xi1}
  \hat{\xi}_t = \frac{1}{\sqrt{\kappa} \sinh(\tau^\kappa(t))} \left(
    \tilde{M}_t - Y_t \right)  \quad d\PP \otimes dt\text{-a.e. on }
  \Omega \times [0,T)
\end{equation}
with corresponding dynamics
\begin{align} \label{eq:dynxi1}
  d\hat{\xi}_t = -\frac{\coth(\tau^\kappa(t))}{\sqrt{\kappa}} ( \xi_t
  - \hat{\xi}_t ) dt + \frac{1}{\sqrt{\kappa} \sinh(\tau^\kappa(t))}
  d\tilde{M}_t \quad \text{on } [0,T).
\end{align}
Due to Lemma \ref{lem:est} b), we know that $\hat{\xi} \in L^2(\PP
\otimes dt)$. Now, the density of the solution from \eqref{eq:SDEX1}
satisfies
\begin{align*}
  d\hat{u}_t = & - \frac{1}{\kappa} (1-\tanh(\tau^\kappa(t))^2) \left(
                 \hat{\xi}_t - \hat{X}_t \right) dt +
                 \frac{1}{\sqrt{\kappa}} \tanh(\tau^\kappa(t)) \left(
                 d\hat{\xi}_t - d\hat{X}_t \right) \\
   = & \frac{1}{\kappa} \left( \left( \hat{X}_t -\xi_t \right) dt +
       \frac{1}{\cosh(\tau^\kappa(t))} d\tilde{M}_t \right) \quad d\PP
       \otimes dt\text{-a.e. on } \Omega \times [0,T],
\end{align*}
that is, $\hat{u}$ satisfies the BSDE-dynamics in
(\ref{eq:FBSDE1}). Obviously, it holds that $\hat{X}_0 = x$. Solving
equation (\ref{eq:SDEX1}) for $\hat{X}$ yields upon differentiation
\begin{align}
  \hat{u}_t = &- \frac{1}{\sqrt{\kappa}}
                \frac{\sinh(\tau^\kappa(t))}{\cosh(\tau^\kappa(0))} x
                \nonumber \\
   & - \frac{1}{\kappa} \sinh(\tau^\kappa(t)) \int_0^t \hat{\xi}_s
     \frac{\sinh(\tau^\kappa(s))}{\cosh(\tau^\kappa(s))^2} ds +
     \frac{1}{\kappa} \frac{\tilde{M}_t -
     Y_t}{\cosh(\tau^\kappa(t))} \label{eq:u1}
\end{align}
and we observe that $\lim_{t \uparrow T} \hat{u}_t = 0$ $\PP$-a.s.,
i.e., the terminal condition in (\ref{eq:FBSDE1}) is indeed
satisfied. It remains to show that $\hat{u} \in L^2(\PP \otimes
dt)$. Since $\tilde{M}, Y \in L^2(\PP \otimes dt$), it suffices to
observe that $\sinh(\tau^\kappa(s))/\cosh(\tau^\kappa(s))^2$ is
bounded and therefore
\begin{align*}
\EE \left[ \int_0^T \left( \int_0^t \hat{\xi}_s
    \frac{\sinh(\tau^\kappa(s))}{\cosh(\tau^\kappa(s))^2} ds \right)^2
\right] dt
 & \leq \text{const} \, \EE \left[ \int_0^T \left( \int_0^t \vert
   \hat{\xi}_s \vert ds \right)^2 dt
\right] \\
& \leq \text{const} \, \frac{T^2}{2} \norm{\hat{\xi}}_{L^2(\PP \otimes
  dt)}^2 < \infty.
\end{align*}

\emph{Computation of minimal costs:} To compute the minimal costs
associated to the optimal control $\hat{u}$ given in
(\ref{eq:cost1}), note first that $\hat{u} \in
L^2(\PP\otimes dt)$ implies $\hat{X} \in L^2(\PP \otimes dt)$ and thus
$J(\hat{u}) < \infty$. For ease of presentation, we define
$$
c(t) \set \sqrt{\kappa} \tanh(\tau^\kappa(t)) , \quad 0 \leq t \leq T,
$$
so that $\hat{u}_t = c(t)(\hat{\xi}_t - \hat{X}_t)/\kappa$. Hence, the
minimal costs can be written as
\begin{align} 
  \infty > J(\hat{u}) = &~\EE \left[ \frac{1}{2} \int_0^T (\hat{X}_s - \xi_s)^2 ds 
   + \frac{1}{2} \kappa \int_0^T \hat{u}^2_s ds \right] \nonumber \\
  = & \lim_{t \uparrow T} \left\{ \frac{1}{2} \EE\left[ \int_0^t \hat{X}^2_s ds \right] 
   - \EE\left[ \int_0^t \hat{X}_s \xi_s ds \right] +
   \frac{1}{2} \EE\left[ \int_0^t \xi^2_s ds \right] \right. \nonumber
  \\
   & \hspace{30pt} + \frac{1}{2\kappa} \EE\left[ \int_0^t c(s)^2
     \hat{\xi}^2_s ds \right] - \frac{1}{\kappa} \EE\left[ \int_0^t
     c(s)^2 \hat{X}_s \hat{\xi}_s ds \right] \nonumber \\
   & \left. \hspace{30pt} + \frac{1}{2\kappa} \EE\left[ \int_0^t
     c(s)^2 \hat{X}^2_s ds \right] \right\},
  \label{eq1:p:cost1}
\end{align}
due to monotone convergence. Observe that, using integration
by parts and the dynamics of $\hat{\xi}$ from \eqref{eq:dynxi1}, we
have, for all $t < T$,
\begin{align*} 
  \EE[c(t) \hat{X}^2_t] = & c(0) x^2 
   + \frac{2}{\kappa} \EE\left[ \int_0^t c(s)^2 \hat{X}_s \hat{\xi}_s ds \right] \nonumber \\ 
   & - \frac{1}{\kappa} \EE\left[ \int_0^t c(s)^2 \hat{X}^2_s ds
     \right] - \EE\left[ \int_0^t \hat{X}^2_s ds \right]
\end{align*}
as well as
\begin{align*} 
  \EE[c(t) \hat{X}_t \hat{\xi}_t] = & c(0) \hat{\xi}_0 x 
   + \frac{1}{\kappa} \EE\left[ \int_0^t c(s)^2 \hat{\xi}^2_s ds \right] 
   - \EE\left[ \int_0^t \hat{X}_s \xi_s ds \right]
\end{align*}
and
\begin{align*} 
  \EE[c(t) \hat{\xi}^2_t] = & c(0) \hat{\xi}^2_0 
   + \frac{1}{\kappa} \EE\left[ \int_0^t c(s)^2 \hat{\xi}^2_s ds \right] 
   - 2 \EE\left[ \int_0^t \hat{\xi}_s \xi_s ds \right] \nonumber \\
   & +  \EE\left[ \int_0^t \hat{\xi}^2_s ds \right] +  \EE\left[
     \int_0^t c(s) d\langle \hat{\xi} \rangle_s \right]. 
\end{align*}
Using these identities, we can write \eqref{eq1:p:cost1} as
\begin{align} 
  \infty > J(\hat{u}) = & \lim_{t \uparrow T} \left\{ \frac{1}{2} c(0) (x - \hat{\xi}_0)^2 
    +  \frac{1}{2} \EE \left[ \int_0^t (\hat{\xi}_s - \xi_s)^2 ds
                          \right] \right. \nonumber \\
    & \left. \hspace{30pt} + \frac{1}{2} \EE \left[ \int_0^t c(s)
      d\langle \hat{\xi} \rangle_s \right]
    - \frac{1}{2} c(t)  \EE[ (\hat{X}_t - \hat{\xi}_t)^2]
      \right\}. \label{eq5:p:cost1}
\end{align}
To conclude our assertion for the minimal costs in $\eqref{eq:cost1}$,
observe that
$$
\EE[ (\hat{X}_{t} - \hat{\xi}_{t})^2] \leq 
2 \left( \EE[ \hat{X}_{t}^2] + \EE [\hat{\xi}_{t}^2] \right),
$$
and let us argue why
\begin{equation} \label{eq6:p:cost1}
  \lim_{t \uparrow T} c(t)  \EE[ \hat{X}_{t}^2] = 0
    \quad \text{and} \quad \lim_{t \uparrow T} c(t) \EE
    [\hat{\xi}_{t}^2] = 0.
\end{equation}
By Jensen's inequality, we have
$$
\EE[\hat{X}^2_t] \leq t \EE \left[ \int_0^t \hat{u}^2_s ds \right]
\leq T \, \EE \left[ \int_0^T \hat{u}^2_s ds \right] < \infty.
$$
Hence, due to $\lim_{t\uparrow T} c(t) = 0$, the first convergence in
\eqref{eq6:p:cost1} holds true. Concerning the second convergence in
\eqref{eq6:p:cost1}, we use the representation in \eqref{eq:xi1} for
$\hat{\xi}$ to obtain, again with Jensen's inequality as well as the
Cauchy-Schwarz inequality,
\begin{align*}
  0 \leq c(t) \EE[\hat{\xi}^2_t] & = \frac{c(t)}{\kappa
  \sinh(\tau^\kappa(t))^2} \EE[(\tilde{M}_t-Y_t)^2] \\
& \leq \frac{c(t)}{\kappa
  \sinh(\tau^\kappa(t))^2} \EE[(Y_T-Y_t)^2]  \\
& = \frac{c(t)}{\kappa
  \sinh(\tau^\kappa(t))^2} \EE\left[ \left( \int_t^T  \xi_s
  \cosh(\tau^\kappa(s))ds \right)^2 \right]  \\
  & \leq \frac{\cosh(\tau^\kappa(0))^2}{\sqrt{\kappa}
    \cosh(\tau^\kappa(t))} \frac{1}{\sinh(\tau^\kappa(t))}
   (T-t) \EE \left[ \int_t^T \xi^2_s ds \right] \\
  & \leq \frac{\cosh(\tau^\kappa(0))^2}{\cosh(\tau^\kappa(t))} \EE
    \left[ \int_t^T \xi^2_s ds \right] \underset{t \uparrow
    T}{\longrightarrow} 0, 
\end{align*}
where for the last inequality we used that $\sinh(\tau) \geq \tau$ for
all $\tau \geq 0$. In other words, also the second convergence in
\eqref{eq6:p:cost1} holds true. This finishes our proof of the
representation of the minimal costs in \eqref{eq:cost1}.
\qed   
\medskip

Next, we come to the proof of Theorem \ref{thm:main2} concerning the
constrained problem \eqref{eq:optProb2}.
\medskip

\noindent\textbf{Proof of Theorem~\ref{thm:main2}.}
Again, we will proceed in two steps. First, we prove optimality of the
solution given in \eqref{eq:SDEX2}. Then, we compute the corresponding
minimal costs given in \eqref{eq:cost2}.

\emph{Optimality of \eqref{eq:SDEX2}:} The verification of the
optimality of $\hat{X}^{\Xi} = x + \int_0^{\cdot} \hat{u}^{\Xi}_t dt$
in Theorem \ref{thm:main2} for the constrained problem
(\ref{eq:optProb2}) follows along the same lines as in the
unconstrained case. Again, we have to check the first order condition
in Lemma \ref{lem:FOC} 2.). For this, we define the processes
 
\begin{equation*}
  Y_t \set \frac{1}{\sqrt{\kappa}} \int_0^t \xi_s
  \sinh(\tau^\kappa(s)) ds \quad \text{and} \quad \tilde{M}^{\Xi}_t
  \set \EE [ Y_T + \Xi_T \vert \cF_t]
\end{equation*}
for all $0 \leq t \leq T$. Since $Z_T, \Xi \in L^2(\PP)$, we have that
$(\tilde{M}^{\Xi}_t)_{0 \leq t \leq T}$ is a square integrable
martingale. Moreover, note that $Y,\tilde{M}^\Xi \in L^2(\PP \otimes
dt)$. Hence, the process $\hat{\xi}^{\Xi}$ in Theorem~\ref{thm:main2}
can be written as
\begin{equation}
  \label{eq:xicon}
  \hat{\xi}^{\Xi}_t = \frac{1}{\cosh(\tau^\kappa(t))} \left(
    \tilde{M}^{\Xi}_t - Y_t \right) \quad d\PP \otimes
  dt\text{-a.e. on } \Omega \times [0,T]
\end{equation}
with corresponding dynamics
\begin{align} \label{eq:dynxi2}
  d\hat{\xi}^\Xi_t = -\frac{\tanh(\tau^\kappa(t))}{\sqrt{\kappa}} (
  \xi_t - \hat{\xi}^\Xi_t ) dt + \frac{1}{\cosh(\tau^\kappa(t))}
  d\tilde{M}^{\Xi}_t \quad \text{on } [0,T].
\end{align}
In particular, we observe that $\hat{\xi}^\Xi \in L^2(\PP \otimes
dt)$. Similar to the unconstrained case above, one easily checks that
\begin{equation*}
  d\hat{u}^{\Xi}_t = \frac{1}{\kappa} (\hat{X}^{\Xi}_t -
  \xi_t) dt + \frac{1}{\sqrt{\kappa}}
  \frac{1}{\sinh(\tau^\kappa(t))} d\tilde{M}^{\Xi}_t \quad d\PP
  \otimes dt\text{-a.e. on } \Omega \times [0,T),
\end{equation*}
that is, $\hat{u}^{\Xi}$ satisfies the dynamics in
(\ref{eq:FBSDE2}). Obviously, it holds that $\hat{X}^{\Xi}_0 = x$.
   
Next, we have to check the terminal condition in (\ref{eq:FBSDE2}),
that is, $\lim_{t \uparrow T} \hat{X}^{\Xi}_t = \Xi_T$ $\PP$-a.s. In
order to show this, first note that we can consider a c\`adl\`ag
version of $(\hat{\xi}_t^{\Xi})_{0 \leq t \leq T}$ due to its
representation in (\ref{eq:xicon}). Hence, since $\Xi_T$ is
$\cF_{T-}$-measurable by assumption \eqref{eq:ccp} we obtain the
$\PP$-a.s. limit
\begin{equation*}
  \lim_{t \uparrow T} \hat{\xi}^{\Xi}_t = \EE[ \Xi_T \vert \cF_{T-}] =
  \Xi_T
\end{equation*}
in (\ref{eq:xicon}). In other words, for every $\epsilon > 0$ there
exists a random time $\Upsilon_{\epsilon} \in [0, T)$ such that
$\PP$-a.s. 
\begin{equation*}
    \Xi_T - \epsilon \leq \hat{\xi}^{\Xi}_t \leq \Xi_T + \epsilon \quad
    \text{for all } t \in [\Upsilon_{\epsilon},T].
\end{equation*}
For $\lim_{t \uparrow T} \hat{X}^{\Xi}_t = \Xi_T$ $\PP$-a.s., it
clearly suffices to show that for any $\epsilon > 0$ it holds that
\begin{equation*}
  \limsup_{t \uparrow T} \hat{X}^{\Xi}_t \leq \Xi_T + \epsilon \quad
  \text{and} \quad 
  \liminf_{t \uparrow T} \hat{X}^{\Xi}_t \geq \Xi_T - \epsilon \quad
  \PP\text{-a.s.}
\end{equation*}
Define $X_t^{\epsilon} \set \Xi_T + \epsilon - \hat{X}^{\Xi}_t$ so
that $\hat{\xi}^{\Xi}_t - \hat{X}^{\Xi}_t \leq X_t^{\epsilon}$
$\PP$-a.s. for $t \in [\Upsilon_{\epsilon},T)$. This yields  
\begin{align*}
  dX^{\epsilon}_t = & -d\hat{X}^{\Xi}_t = -\frac{1}{\sqrt{\kappa}}
                      \coth(\tau^\kappa(t)) (\hat{\xi}^{\Xi}_t -
                      \hat{X}^{\Xi}_t ) dt \\
  \geq & -\frac{1}{\sqrt{\kappa}} \coth(\tau^\kappa(t)) X^{\epsilon}_t
         dt. 
\end{align*}
Moreover, note that for all $\omega \in \Omega$ the linear ODE on
$[\Upsilon_{\epsilon}(\omega),T)$ given by
\begin{equation*}
  dZ_t = -\frac{1}{\sqrt{\kappa}} \coth(\tau^\kappa(t)) Z_t dt, \quad
  Z_{\Upsilon_{\epsilon}(\omega)} =
  X^{\epsilon}_{\Upsilon_{\epsilon}(\omega)}(\omega),
\end{equation*}
admits the solution
\begin{equation*}
  Z_t =
  X^{\epsilon}_{\Upsilon_{\epsilon}}\exp\left(-\frac{1}{\sqrt{\kappa}}
    \int_{\Upsilon_{\epsilon}}^t \coth(\tau^\kappa(s)) ds \right) =
  X^{\epsilon}_{\Upsilon_{\epsilon}}
  \frac{\sinh(\tau^{\kappa}(t))}{\sinh(\tau^{\kappa}(\Upsilon_{\epsilon}))},
  \quad t < T,
\end{equation*}
with $\lim_{t \uparrow T} Z_t = 0$. By the comparison principle for
ODEs, we get $\PP$-a.s. $X^{\epsilon}_t \geq Z_t$  for all $t \in
[\Upsilon_{\epsilon},T)$. Hence,
\begin{equation*}
  \liminf_{t \uparrow T} X^{\epsilon}_t \geq \lim_{t \uparrow T} Z_t =
  0 \quad \PP\text{-a.s.},
\end{equation*}
that is, $\limsup_{t \uparrow T} \hat{X}^{\Xi}_t \leq \Xi_T +
\epsilon$ $\PP$-a.s. Similarly, define $\tilde{X}_t^{\epsilon} \set
\Xi_T - \epsilon - \hat{X}^{\Xi}_t$ and observe as above that
$\PP$-a.s. on $[\Upsilon_{\epsilon},T)$ we have  
\begin{equation*} 
  d\tilde{X}^{\epsilon}_t \leq -\frac{1}{\sqrt{\kappa}}
  \coth(\tau^\kappa(t)) \tilde{X}^{\epsilon}_t dt. 
\end{equation*}
Again, as above by comparison principle we obtain
\begin{equation*}
  \limsup_{t \uparrow T} \tilde{X}^{\epsilon}_t \leq 0 \quad \PP\text{-a.s.},
\end{equation*}
i.e., $\liminf_{t \uparrow T} \hat{X}^{\Xi}_t \geq \Xi_T - \epsilon$
$\PP$-a.s. as remained to be shown for \eqref{eq:FBSDE2}.

Finally, we have to argue that $\hat{u}^\Xi \in L^2(\PP \otimes
dt)$. For this, we may assume without loss of generality that
$x=0$. Moreover, let us denote $\hat{u}^{\Xi,\xi} \set \hat{u}^\Xi$,
$\hat{X}^{\Xi,\xi} \set \hat{X}^\Xi$ and $\hat{\xi}^{\Xi,\xi} \set
\hat{\xi}^{\Xi}$ to emphasize also the dependence on the given target
process $\xi$. With this notation it holds that
$$
\hat{u}^{\Xi,\xi} = \hat{u}^{\Xi,0} + \hat{u}^{0,\xi}.
$$
Hence, we have to show that $\hat{u}^{\Xi,0} \in L^2(\PP \otimes dt)$
and $\hat{u}^{0,\xi} \in L^2(\PP \otimes dt)$.
 
Concerning $\hat{u}^{\Xi,0}$, observe that, using $\hat{\xi}^{\Xi,0}_t
= \Xi_t/\cosh(\tau^\kappa(t))$ with $\Xi_t \set \EE[\Xi_T \vert
\cF_t]$, $0 \leq t \leq T$, as well as the explicit solution
$\hat{X}^{\Xi,0}_t$ for the ODE in \eqref{eq:SDEX2}, we obtain 
\begin{align}
  \hat{u}^{\Xi,0}_t = &
                        \frac{\coth(\tau^\kappa(t))}{\sqrt{\kappa}}
                        \left(\hat{\xi}_t^{\Xi,0}-\hat{X}^{\Xi,0}_t \right) \nonumber \\
  = & \frac{\coth(\tau^\kappa(t))}{\sqrt{\kappa}} \left( e^{-\int_0^t
      \frac{\coth(\tau^\kappa(u))}{\sqrt{\kappa}}du}
      \hat{\xi}^{\Xi,0}_0 + \right. \nonumber \\
                      & \hspace{68pt} \left. e^{-\int_0^t
                        \frac{\coth(\tau^\kappa(u))}{\sqrt{\kappa}}du} \int_0^t e^{\int_0^s
                        \frac{\coth(\tau^\kappa(u))}{\sqrt{\kappa}}du}
                        d\hat{\xi}^{\Xi,0}_s  \right) \nonumber \\
  = & \frac{\cosh(\tau^\kappa(t))}{\sqrt{\kappa}\sinh(\tau^\kappa(0))}
      \hat{\xi}^{\Xi,0}_0 +\frac{\cosh(\tau^\kappa(t))}{\kappa}
      \int_0^t \frac{\Xi_s}{\cosh(\tau^\kappa(s))^2} ds \nonumber \\
                      & + \frac{\cosh(\tau^\kappa(t))}{\sqrt{\kappa}} \int_0^t
                        \frac{2}{\sinh(2 \tau^\kappa(s))} d\Xi_s, 
                        \label{eq:p:ccp1}
\end{align}
where we used integration by parts in the second line. Obviously, the
first two terms in \eqref{eq:p:ccp1} belong to $L^2(\P \otimes
dt)$.
The third term is in $L^2(\PP \otimes dt)$ since, using Fubini's
Theorem as well as $\sinh(\tau) \geq \tau$ for all $\tau \geq 0$, we
get
  \begin{align*}
    & \EE \left[ \int_0^T \left(  \int_0^t  \frac{2 d\Xi_s }{\sinh(2 \tau^\kappa(s))} \right)^2 dt \right]
      = \EE \left[ \int_0^T \int_0^t \left(  \frac{2}{\sinh(2 \tau^\kappa(s))} \right)^2 
      d\langle \Xi \rangle_s dt \right] \\
    & = \EE \left[ \int_0^T (T-s) \left(  \frac{2}{\sinh(2 \tau^\kappa(s))} \right)^2 
      d\langle \Xi \rangle_s \right]
      \leq \EE \left[ \int_0^T \frac{\kappa}{T-s} d\langle \Xi \rangle_s \right] \\
    & = \kappa \int_0^T \frac{d\EE[\Xi^2_s]}{T-s} < \infty
  \end{align*}
by assumption \eqref{eq:ccp}.

Concerning $\hat{u}^{0,\xi}$, we use the explicit expressions for
$\hat{\xi}^{0,\xi}_t$ and $\hat{X}^{0,\xi}_t$ to obtain in
\eqref{eq:SDEX2} that
\begin{align}
  \hat{u}^{0,\xi}_t = 
  & \frac{\coth(\tau^\kappa(t))}{\sqrt{\kappa}}
    \left(\hat{\xi}_t^{0,\xi}-\hat{X}^{0,\xi}_t
    \right) \nonumber \\
  = & \frac{\cosh(\tau^\kappa(t))-1}{\sqrt{\kappa}\sinh(\tau^\kappa(t))}
      \EE\left[ \int_t^T \xi_u K^{\Xi}(t,u) du \Big\vert \cF_t \right] \nonumber \\
  & - \frac{\cosh(\tau^\kappa(t))}{\kappa} \int_0^t \frac{\cosh(\tau^\kappa(s))-1}{\sinh(\tau^\kappa(s))^2} 
    \EE\left[ \int_s^T\xi_u K^{\Xi}(s,u) du \Big\vert \cF_s\right] ds. \label{eq:p:ccp2}
\end{align}
Note that all the ratios in \eqref{eq:p:ccp2} involving the functions
$\cosh(\cdot)$ and $\sinh(\cdot)$ are actually bounded on
$[0,T]$. Moreover, we have by Lemma \ref{lem:est} c) below that
$$
\EE\left[ \int_t^T \xi_u K^{\Xi}(t,u) du \Big\vert \cF_t \right]
\in L^2(\PP \otimes dt),
$$
as well as, using Jensen's inequality,
\begin{align*}
  & \EE\left[ \int_0^T \left( \int_0^t  \EE\left[ \int_s^T\xi_u
    K^{\Xi}(s,u) du \Big\vert \cF_s\right] ds \right)^2 dt \right] \\
  & \leq \frac{T^2}{2} \EE\left[ \int_0^T \left( \EE\left[ \int_s^T\xi_u
    K^{\Xi}(s,u) du \Big\vert \cF_s\right] \right)^2 ds \right]
    < \infty.
\end{align*}
Together, this shows $\hat{u}^{\Xi} \in L^2(\PP \otimes dt)$ as
desired. 
  
\emph{Computation of minimal costs:} Now, we compute the minimal costs
associated to the optimal control $\hat{u}^\Xi$ given in
\eqref{eq:cost2}. We will follow along the same lines as in the
unconstrained case above. First of all, note that
$\hat{u}^\Xi \in L^2(\PP\otimes dt)$ implies
$\hat{X}^\Xi \in L^2(\PP\otimes dt)$ and hence $J(\hat{u}) < \infty$.
For ease of presentation, we define
$$
c(t) \set \sqrt{\kappa} \coth(\tau^\kappa(t)), \quad 0 \leq t < T,
$$
i.e., $\hat{u}^\Xi_t=c(t)(\hat{\xi}^\Xi_t - \hat{X}^\Xi_t)/\kappa $.
Analogously to the unconstrained case above, we can write
$J(\hat{u}^\Xi)$ as
\begin{align}
  \infty > J(\hat{u}^\Xi) = 
  & \lim_{t \uparrow T} \left\{ \frac{1}{2} c(0) (x - \hat{\xi}^\Xi_0)^2 
    +  \frac{1}{2} \EE \left[ \int_0^t (\hat{\xi}^\Xi_s - \xi_s)^2 ds \right] \right. \nonumber \\
  & \left. + \frac{1}{2} \EE \left[ \int_0^t c(s) d\langle \hat{\xi}^\Xi \rangle_s \right]
    - \frac{1}{2} c(t)  \EE[ (\hat{X}^\Xi_t - \hat{\xi}^\Xi_t)^2] \right\}. \label{eq1:p:cost2}
\end{align}
To conclude our assertion for the minimal costs in
$\eqref{eq:cost2}$, observe that
$$
\EE[ (\hat{X}^\Xi_{t} - \hat{\xi}^\Xi_{t})^2] \leq 2 \left( \EE[
  (\hat{X}_{t}^\Xi - \Xi_t )^2] + \EE [(\Xi_t - \hat{\xi}^\Xi_{t})^2]
\right),
$$
where $\Xi_t \set \EE[\Xi_T \vert \cF_t]$, $0 \leq t \leq T$, and let
us argue why
\begin{equation} \label{eq2:p:cost2} \lim_{t \uparrow T} c(t) \EE[
  (\hat{X}_{t}^\Xi - \Xi_t )^2] = 0 \quad \text{and} \quad \lim_{t
    \uparrow T} c(t) \EE [(\Xi_t - \hat{\xi}^\Xi_{t})^2] = 0.
\end{equation}

Concerning the first convergence in \eqref{eq2:p:cost2}, Jensen's
inequality, monotonicity of the function $\cosh(\cdot)$ as well as the
estimate $\sinh(\tau) \geq \tau$ for all $\tau \geq 0$ yield
\begin{align}
  c(t) \EE[ (\hat{X}_{t}^\Xi - \Xi_t )^2] 
  & \leq c(t) \EE[ (\hat{X}_{t}^\Xi - \hat{X}^\Xi_T )^2] \nonumber \\ 
  & \leq \frac{\kappa \cosh(\tau^\kappa(0))}{T-t} \EE \left[ \left( \int_t^T \hat{u}^\Xi_s ds \right)^2 \right] 
    \nonumber \\
  & \leq \kappa \cosh(\tau^\kappa(0)) \EE \left[ \int_t^T (\hat{u}_s^\Xi)^2 ds \right] 
    \underset{t \uparrow T}{\longrightarrow} 0, \label{eq3:p:cost2}
\end{align}
since $\Xi_T = \hat{X}^\Xi_T$ and
$\hat{u}^\Xi \in L^2(\PP \otimes dt)$.

Concerning the second convergence in \eqref{eq2:p:cost2}, we
insert the definition for $\hat{\xi}^\Xi$ to obtain that
\begin{align*}
  & c(t) \EE[ (\Xi_t - \hat{\xi}^\Xi_{t})^2] \\
  & = c(t) \EE \left[ \left( \frac{\cosh(\tau^\kappa(t)) -
    1}{\cosh(\tau^\kappa(t))} \Xi_t \right. \right. \\
  & \hspace{60pt} -
   \left.\left. \frac{\cosh(\tau^\kappa(t)) - 1}{\cosh(\tau^\kappa(t))} \EE\left[
    \int_t^T \xi_u K^\Xi(t,u) du \Big\vert \cF_t
    \right] \right)^2 \right] \\
  & \leq 2 c(t) \left( \frac{\cosh(\tau^\kappa(t)) -
    1}{\cosh(\tau^\kappa(t))} \right)^2 \EE [\Xi^2_T] \\
  & \hspace{16pt} + 2 c(t) \left( \frac{\cosh(\tau^\kappa(t)) -
    1}{\cosh(\tau^\kappa(t))} \right)^2
    \EE\left[ 
    \int_t^T \xi^2_u K^\Xi(t,u) du \right] \\
  & \leq \frac{2\sqrt{\kappa}}{\cosh(\tau^\kappa(t))} \frac{(\cosh(\tau^\kappa(t)) - 1 )^2}{\sinh(\tau^\kappa(t))}
    \EE[\Xi^2_T] \\
  & \hspace{16pt} + \frac{2\sinh(\tau^\kappa(0))}{\cosh(\tau^\kappa(t))} \frac{\cosh(\tau^\kappa(t)) - 1}{\sinh(\tau^\kappa(t))} 
    \EE\left[ \int_t^T \xi^2_u du \right] \underset{t \uparrow T}{\longrightarrow} 0,
\end{align*}
since $\Xi_T \in L^2(\PP)$, $\xi \in L^2(\PP \otimes dt)$ and
$\lim_{t\uparrow T}(\cosh(\tau^\kappa(t)) - 1)/\sinh(\tau^\kappa(t)) =
0$.
Consequently, also the second convergence in \eqref{eq2:p:cost2} holds
true.  This finishes our proof of the representation of the minimal
costs in \eqref{eq:cost2}.  \qed \medskip

The next Lemma shows that the set $\cU^{\Xi}_x$ is not empty under the
assumption \eqref{eq:ccp}.

\begin{Lemma} \label{lem:ccp} For $\Xi_T \in L^2(\PP,\cF_T)$ we have
  that $\cU^{\Xi}_x \neq \varnothing$ if and only if condition
  \eqref{eq:ccp} holds, i.e., if and only if
  $\int_0^T \frac{d\EE[\Xi_t^2]}{T-t} < \infty$ with
  $\Xi_t \set \EE[ \Xi_T \vert \cF_t]$ for all $0 \leq t \leq T$.
\end{Lemma}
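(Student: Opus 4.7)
My plan is to establish the two implications of the equivalence separately.

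For the sufficiency direction, that \eqref{eq:ccp} implies $\cU^{\Xi}_x \neq \varnothing$, I would appeal directly to Theorem~\ref{thm:main2}: under \eqref{eq:ccp}, that theorem produces an explicit control $\hat{u}^{\Xi} \in L^2(\PP \otimes dt)$ whose trajectory $\hat{X}^{\Xi}$ starts at $x$ and satisfies $\lim_{t \uparrow T} \hat{X}^{\Xi}_t = \Xi_T$ $\PP$-a.s., so $\hat{u}^{\Xi} \in \cU^{\Xi}_x$ witnesses the non-emptiness. There is no circularity here: the proof of Theorem~\ref{thm:main2} invokes~\eqref{eq:ccp} only through the direct estimate
\[ \EE\int_0^T\Bigl(\int_0^t\frac{2\,d\Xi_s}{\sinh(2\tau^\kappa(s))}\Bigr)^2 dt \leq \kappa \int_0^T \frac{d\EE[\Xi_s^2]}{T-s}, \]
never through the present lemma.

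For the necessity direction, fix an arbitrary $u \in \cU^{\Xi}_x$ and set $X_t \set x + \int_0^t u_s\, ds$ and $w_t \set \Xi_t - X_t$. Since $X_T = \Xi_T$ $\PP$-a.s., the tower property gives $w_t = \EE[\int_t^T u_s\, ds \mid \cF_t]$, whence Cauchy--Schwarz yields $\EE[w_t^2] \leq (T-t) \EE \int_t^T u_s^2\, ds$; in particular $\EE[w_{T-\epsilon}^2]/\epsilon \to 0$ as $\epsilon \downarrow 0$. The key step is then to apply It\^o's formula to $w_t^2/(T-t)$ on $[0, T-\epsilon]$, using the semimartingale decomposition $dw_t = d\Xi_t - u_t\, dt$. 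Completing the square via $w_s^2/(T-s)^2 - 2 w_s u_s/(T-s) = (w_s/(T-s) - u_s)^2 - u_s^2$ and taking expectation -- the stochastic integral against $d\Xi_s$ vanishes because its integrand $2w_s/(T-s)$ is bounded on $[0, T-\epsilon]$ and $\Xi$ is a square-integrable martingale -- one obtains after rearrangement
\[ \EE\int_0^{T-\epsilon}\frac{d\langle \Xi\rangle_s}{T-s} = \EE\Bigl[\frac{w_{T-\epsilon}^2}{\epsilon}\Bigr] - \frac{w_0^2}{T} - \EE\int_0^{T-\epsilon}\Bigl(\frac{w_s}{T-s} - u_s\Bigr)^2 ds + \EE\int_0^{T-\epsilon} u_s^2\, ds \leq 2\EE\int_0^T u_s^2\, ds. \]
Monotone convergence as $\epsilon \downarrow 0$, together with the identity $d\EE[\Xi_t^2] = d\EE[\langle \Xi\rangle_t]$ valid for any square-integrable c\`adl\`ag martingale, then delivers~\eqref{eq:ccp}.

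The hard part is the technical justification of It\^o's formula on the strict interior $[0, T-\epsilon]$ when $\Xi$ may have jumps, together with the passage $\epsilon \downarrow 0$. One invokes $d\EE[[\Xi]_s] = d\EE[\langle \Xi\rangle_s] = d\EE[\Xi_s^2]$ for c\`adl\`ag $L^2$-martingales to identify the expected quadratic-variation term with the measure appearing in \eqref{eq:ccp}. Note also that any jump of $\Xi$ at $T$ would produce an atom of $d\EE\langle \Xi\rangle$ at $T$ and hence force $\int_0^T d\EE\langle\Xi\rangle_s/(T-s) = \infty$, so \eqref{eq:ccp} automatically precludes such a jump and ensures $\Xi_T \in \cF_{T-}$, as already pointed out in the remark following the definition of $\cU^{\Xi}_x$.
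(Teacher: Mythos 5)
Your sufficiency argument is exactly the paper's: both cite the optimizer $\hat{u}^{\Xi}$ from Theorem~\ref{thm:main2}, and your observation that no circularity arises is correct. For necessity, however, you take a genuinely different route. The paper stays entirely within elementary $L^2$ arguments: it integrates by parts in the Stieltjes integral $\int_0^T \frac{d\EE[\Xi_t^2]}{T-t}$, uses the martingale identity $\EE[\Xi_T^2-\Xi_s^2]=\EE[(\Xi_T-\Xi_s)^2]$ together with the $L^2$-projection property to bound this by $\EE[(\int_s^T u_r\,dr)^2]$, and then closes the estimate with the Hardy-type inequality of Lemma~\ref{lem:est}~a). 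Your proof instead runs It\^o's formula on $w_t^2/(T-t)$ with $w_t=\Xi_t-X_t^u$ and completes the square; this yields essentially the energy identity underlying the value function of the constrained problem and hence a quantitative bound $\EE\int_0^{T-\epsilon}\frac{d\langle\Xi\rangle_s}{T-s}\leq 2\,\EE\int_0^T u_s^2\,ds$, which is more informative than the paper's qualitative finiteness, but at the price of stochastic calculus for c\`adl\`ag $L^2$-martingales. Two points in your write-up need repair. First, the integrand $2w_{s-}/(T-s)$ of the stochastic integral against $d\Xi$ is \emph{not} bounded on $[0,T-\epsilon]$ ($w$ is merely square-integrable); you should instead localize and pass to the limit by Fatou, or verify $\EE\int_0^{T-\epsilon} w_{s-}^2\,d\langle\Xi\rangle_s<\infty$ directly. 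Second, your monotone-convergence step only controls the integral over $[0,T)$; to rule out an atom of $d\EE[\Xi_t^2]$ at $t=T$ you cannot invoke \eqref{eq:ccp} (that is what you are proving) but must argue from the existence of $u$ that $\Xi_{T-}=\Xi_T$, which follows since $\EE[w_t^2]\leq (T-t)\EE\int_t^T u_s^2\,ds\to 0$ and $X^u$ is pathwise continuous at $T$. With these repairs your argument is a valid, if heavier, alternative; the paper's projection-plus-Hardy argument avoids It\^o's formula and jump bookkeeping altogether.
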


\begin{proof}
  Let $\Xi_T \in L^2(\PP,\cF_T)$. We first prove necessity. Assume
  there exists $u \in \cU^{\Xi}_x$, i.e., $u \in L^2(\PP \otimes dt)$
  such that
  $$
    X^u_T = x + \int_0^T u_s ds = \Xi_T.
  $$
  Then, applying Fubini's Theorem, we obtain
  $$
    \int_0^T\frac{d\EE[\Xi_t^2]}{T-t} = \frac{1}{T} (\EE[\Xi^2_T]- \EE[\Xi^2_0]) 
    + \int_0^T \EE[\Xi_T^2 - \Xi_s^2] d\left(\frac{1}{T-s}\right).
  $$
  Moreover,
  $\EE[\Xi_T^2 - \Xi_s^2] = \EE[(\Xi_T - \Xi_s)^2] \leq \EE[(X^u_T -
  X^u_s)^2]$
  due to the $L^2$-projection property of conditional
  expectations. Hence, we get
  \begin{align*}
    \int_0^T\frac{d\EE[\Xi_t^2]}{T-t} \leq & \frac{1}{T} (\EE[\Xi^2_T]- \EE[\Xi^2_0]) 
                                             + \int_0^T \EE \left[
                                             \left( \int_s^T u_r dr
                                             \right)^2 \right] d\left(\frac{1}{T-s}\right) \\
    = &  \frac{1}{T} (\EE[\Xi^2_T]- \EE[\Xi^2_0]) 
           + \EE \left[ \int_0^T \left( \frac{1}{T-s} \int_s^T u_r dr \right)^2 ds \right] 
           < \infty
  \end{align*}
  by $\Xi_T \in L^2(\PP)$ and Lemma \ref{lem:est} a).
  
  For sufficiency, simply consider the optimizer $\hat{u}^{\Xi}$ from
  Theorem \ref{thm:main2} which we proved to be in $\cU^\Xi_x$ under
  the condition \eqref{eq:ccp}.
\end{proof}

The final Lemma collects estimates concerning the
$L^2(\PP \otimes dt)$-norm which are needed several times in the
proofs above.

\begin{Lemma} \label{lem:est} 
  Let $(\zeta_t)_{0 \leq t \leq T} \in L^2(\PP \otimes dt)$ be
  progressively measurable. Moreover, let $K(t,u)$, $K^\Xi(t,u)$,
  $0 \leq t \leq u < T$, denote the kernels from Theorems
  \ref{thm:main1} and \ref{thm:main2}, respectively.
  \begin{itemize}
  \item[a)] For
    $\bar{\zeta}_t \set \frac{1}{T-t} \int_t^T \zeta_s ds$, $t < T$,
    we have
  $$\norm{\bar\zeta}_{L^2(\PP \otimes dt)} \leq 2 \norm{\zeta}_{L^2(\PP \otimes dt)}.$$
\item[b)] For
  $\zeta^K_t \set \EE[\int_t^T \zeta_u K(t,u) du \vert \cF_t]$,
  $t < T$, we have
  $$\norm{\zeta^K}_{L^2(\PP \otimes dt)} \leq c \norm{\zeta}_{L^2(\PP \otimes dt)}$$
  for some constant $c > 0$.
\item[c)] For
  $\zeta^{K^{\Xi}}_t \set \EE[\int_t^T \zeta_u K^{\Xi}(t,u) du \vert
  \cF_t]$, $t < T$, we have
  $$\norm{\zeta^{K^{\Xi}}}_{L^2(\PP \otimes dt)} \leq c \norm{\zeta}_{L^2(\PP \otimes dt)}$$
  for some constant $c>0$.
  \end{itemize} 
\end{Lemma}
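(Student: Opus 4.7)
The plan is to treat (a) by a direct reduction to the classical Hardy inequality, and (b), (c) in two steps: first use conditional Jensen to reduce to a purely deterministic $L^2$ operator-norm bound, then reduce that bound to Hardy via the substitution $s \set \tau^\kappa(t)$. Throughout let $A \set T/\sqrt{\kappa}$ and denote by $H$ the forward Hardy averaging operator $(Hf)(r) \set \frac{1}{r}\int_0^r f(v)\,dv$, for which the classical Hardy inequality gives $\|Hf\|_{L^2(0,\infty)} \leq 2\|f\|_{L^2(0,\infty)}$ for $f \geq 0$.

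For (a), the pathwise time reversal $r \set T-t$ and $g(v,\omega) \set \zeta_{T-v}(\omega)$ reveals that $\bar\zeta_{T-r}(\omega) = (Hg(\cdot,\omega))(r)$. Squaring, integrating in $r$ over $[0,T]$, taking expectation and invoking Hardy's inequality pathwise immediately yields $\|\bar\zeta\|_{L^2(\PP \otimes dt)} \leq 2 \|\zeta\|_{L^2(\PP \otimes dt)}$.

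For (b) and (c), the first step uses that $K(t,\cdot)$ and $K^\Xi(t,\cdot)$ are probability densities on $[t,T]$ (as noted after Theorem~\ref{thm:main2}), so conditional Jensen gives the pathwise bounds $(\zeta^K_t)^2 \leq \EE[\int_t^T \zeta_u^2 K(t,u)\,du \mid \cF_t]$ and analogously for $\zeta^{K^\Xi}$. Taking expectations and applying Fubini reduces both claims to the deterministic $L^2([0,T],dt)$-boundedness of the integral operators $f \mapsto \int_t^T f(u) K(t,u)\,du$ and $f \mapsto \int_t^T f(u) K^\Xi(t,u)\,du$. The substitution $s \set \tau^\kappa(t)$, $v \set \tau^\kappa(u)$, $\tilde f(v) \set f(T - \sqrt{\kappa}\,v)$ transforms these into
\begin{equation*}
\tilde f \mapsto \frac{1}{\sinh(s)} \int_0^s \tilde f(v) \cosh(v)\, dv
\quad\text{and}\quad
\tilde f \mapsto \frac{1}{\cosh(s) - 1} \int_0^s \tilde f(v) \sinh(v)\, dv
\end{equation*}
on $L^2([0,A],ds)$ (with the two $L^2$ norms related by a mere $\sqrt[4]{\kappa}$-factor). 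For the first, $\cosh(v) \leq \cosh(A)$ combined with the elementary inequality $\sinh(s) \geq s$ yields the pointwise estimate $\cosh(A) \cdot (H|\tilde f|)(s)$; for the second, $\sinh(v) \leq \sinh(s)$ together with the identity $s\sinh(s)/(\cosh(s)-1) = s\coth(s/2) \leq 2\cosh(A/2)$ yields $2\cosh(A/2)\cdot (H|\tilde f|)(s)$. A final application of Hardy's inequality then produces the desired $L^2$ boundedness with constants depending only on $T$ and $\kappa$.

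The main obstacle to a more naive approach is the singular behaviour of the kernels as $t \uparrow T$, which makes a direct Cauchy--Schwarz-plus-Fubini argument fail because $\int_0^u K(t,u)\,dt$ diverges as $u \uparrow T$. Conditional Jensen sidesteps this by exploiting the probability-density property of $K(t,\cdot)$ in the $u$-direction; what remains is a purely deterministic operator whose singularity, after the $\tau^\kappa$-substitution, is of the tamer Hardy type and can be absorbed by the pointwise bound above.
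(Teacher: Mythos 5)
Your part a) and the deterministic core of your argument for b) and c) are correct and take a genuinely different route from the paper: after time reversal, respectively the substitution $s=\tau^\kappa(t)$, you recognize all three maps as weighted Hardy averages and invoke the classical Hardy inequality, whereas the paper reproves the relevant Hardy-type bound from scratch for a) and b) by expanding the square, symmetrizing via $\int_0^{r\wedge s}K(t,r)K(t,s)\,dt\leq \kappa^{-1/2}\cosh(\tau^\kappa(r))\cosh(\tau^\kappa(s))\coth(\tau^\kappa(r\wedge s))$ and closing a bootstrap inequality of the form $\norm{\zeta^K}^2\leq c\,\norm{\zeta}\,\norm{\zeta^K}$. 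Your substitution makes the structure more transparent and produces explicit constants.

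There is, however, a genuine gap in the probabilistic reduction for b). From your double-Jensen bound $(\zeta^K_t)^2\leq\EE[\int_t^T\zeta_u^2K(t,u)\,du\mid\cF_t]$, taking expectations and integrating in $t$ gives, by Fubini,
\begin{equation*}
\norm{\zeta^K}^2_{L^2(\PP\otimes dt)}\;\leq\;\int_0^T\EE[\zeta_u^2]\left(\int_0^uK(t,u)\,dt\right)du ,
\end{equation*}
so what this reduction actually requires is the $L^1$-Schur bound $\sup_{u<T}\int_0^uK(t,u)\,dt<\infty$ --- precisely the quantity you yourself point out diverges (logarithmically, since $1/\sinh(\tau^\kappa(t))\sim\sqrt{\kappa}/(T-t)$). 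The $L^2([0,T],dt)$-boundedness of $f\mapsto\int_t^Tf(u)K(t,u)\,du$, which you then establish, does not help once the square has been pushed inside the $du$-integral: it controls the operator applied to $\zeta$, not to $\zeta^2$. The repair is easy but necessary: apply Jensen only to the conditional expectation, i.e.\ $(\zeta^K_t)^2\leq\EE[(\int_t^T\zeta_uK(t,u)\,du)^2\mid\cF_t]$, take expectations, and then apply your deterministic Hardy bound $\omega$-wise to the inner pathwise integral (the paper achieves the same reduction by passing to the deterministic function $t\mapsto\norm{\zeta_t}_{L^2(\PP)}$). For c) your chain is formally broken in the same place, but there both repairs are available: the single-Jensen route combined with your deterministic bound, or the double-Jensen route combined with the uniform bound $\int_0^uK^\Xi(t,u)\,dt\leq c$, which does hold and is what the paper uses.
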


\begin{proof}
a) By Fubini's Theorem and the Cauchy-Schwarz inequality, we have
\begin{align*}
  \norm{\bar\zeta}^2_{L^2(\PP\otimes dt)} 
  & = \EE \left[ \int_0^T \int_0^T
    \zeta_r \zeta_s \int_0^{r \wedge s} \left(\frac{1}{T-t}\right)^2 dt
    dr ds \right] \\
 & = \EE \left[ \int_0^T \int_0^T
    \zeta_r \zeta_s \frac{1}{T-r \wedge s} dr ds \right]  - \frac{1}{T} \EE\left[ \left(
    \int_0^T \zeta_s ds \right)^2 \right]\\
  & \leq \EE \left[ 2 \int_0^T \zeta_r \int_0^r
    \zeta_s \frac{1}{T-s} ds dr\right] \\
  & = 2 \EE \left[ \int_0^T \zeta_s \left( \frac{1}{T-s} \int_s^T
    \zeta_r dr \right) ds \right] \\
  & \leq 2 \norm{\zeta}_{L^2(\PP \otimes dt)}
    \norm{\bar\zeta}_{L^2(\PP \otimes dt)}
\end{align*}
and hence the assertion.

b) First, assume that $(\zeta_t)_{0 \leq t \leq T}$ is deterministic,
and so $\zeta^K_t = \int_t^T \zeta_u K(t,u) du$. By similar
computations as in a) we obtain
\begin{align*}
  \norm{\zeta^K}^2_{L^2(dt)} 
  & = \int_0^T \int_0^T
    \zeta_r \zeta_s \int_0^{r \wedge s} K(t,r)K(t,s) dt
    dr ds \\
  & \leq \int_0^T \int_0^T
    \zeta_r \zeta_s \frac{1}{\sqrt{\kappa}} \cosh(\tau^\kappa(r)) \cosh(\tau^\kappa(s))
    \coth(\tau^\kappa(r \wedge s)) dr ds \\
  & = 2 \int_0^T \zeta_r \frac{\cosh(\tau^\kappa(r))}{\sqrt{\kappa}} \int_0^r
    \zeta_s \cosh(\tau^\kappa(s)) \coth(\tau^\kappa(s)) ds dr \\
  & = 2 \int_0^T \zeta_s \cosh(\tau^\kappa(s))^2 \zeta^{K}_s ds \\
  & \leq 2 \cosh(\tau^\kappa(0))^2 \norm{\zeta}_{L^2(dt)} \norm{\zeta^K}_{L^2(dt)},
\end{align*}
i.e., $\norm{\zeta^K}_{L^2(dt)} \leq c \norm{\zeta}_{L^2(dt)}$ for
some constant $c>0$.  Now, for general
$(\zeta_t)_{0 \leq t \leq T} \in L^2(\PP \otimes dt)$ progressively
measurable, we get with Fubini's Theorem
$$
\EE \left[ \int_0^T (\zeta^K_t)^2 dt \right] = 
\int_0^T \int_t^T \int_t^T \EE\Big[ \EE[\zeta_r\vert\cF_t]
\EE[\zeta_s\vert\cF_t] \Big] K(t,r) K(t,s) dr ds dt.
$$
Again, application of Cauchy-Schwarz's and Jensen's inequalities
yields
$$
\EE\left[ \EE[\zeta_r\vert\cF_t] \EE[\zeta_s\vert\cF_t] \right] \leq
\norm{\zeta_r}_{L^2(\PP)} \norm{\zeta_s}_{L^2(\PP)}, \quad t \leq r,s
\leq T.
$$
Consequently,
\begin{align*}
  \norm{\zeta^K}_{L^2(\PP \otimes dt)}^2 
  & \leq 
    \int_0^T \int_t^T \int_t^T \norm{\zeta_r}_{L^2(\PP)}
    \norm{\zeta_s}_{L^2(\PP)} K(t,r) K(t,s) dr ds dt \\
  & =\int_0^T \left( \int_t^T \norm{\zeta_r}_{L^2(\PP)} K(t,r) dr \right)^2
    dt.
\end{align*}
Now, put $\tilde{\zeta}_t \set \norm{\zeta_t}_{L^2(\PP)}$ and apply
the estimate already proved for deterministic functions to conclude
\begin{align*}
  \norm{\zeta^K}_{L^2(\PP \otimes dt)}^2 
  & =\int_0^T \left( \int_t^T \tilde{\zeta}_r K(t,r) dr \right)^2
    dt \\
& \leq c \int_0^T \vert \tilde{\zeta}_t \vert^2 dt
= c \int_0^T \EE[\zeta_t^2] dt = c \norm{\zeta}^2_{L^2(\PP \otimes dt)}.
\end{align*}

c) Jensen's inequality and Fubini's Theorem give
\begin{align*}
  \norm{\zeta^{K^{\Xi}}}^2_{L^2(\PP \otimes dt)} 
  & = \EE \left[ \int_0^T (\zeta^{K^\Xi}_t)^2 dt \right]
    \leq \int_0^T \int_t^T \EE[\zeta_u^2] K^\Xi(t,u) du dt \\                                      
  & = \int_0^T \EE[\zeta_u^2] \int_0^u K^\Xi(t,u) dt du.
\end{align*}
Now, using $\cosh(\tau)-1 \geq \tau^2/2$ for all $\tau \geq 0$, we get
\begin{align*}
  0 & \leq \int_0^u K^\Xi(t,u) dt = \int_0^u
      \frac{\sinh(\tau^\kappa(u))}{\sqrt{\kappa}(\cosh(\tau^\kappa(t))-1)}
      dt \\
    & \leq
      \frac{\sinh(\tau^\kappa(u))}{\sqrt{\kappa}}
      \int^u_0\frac{2\kappa}{(T-t)^2} dt
      \leq
      2 \sqrt{\kappa} \, \frac{\sinh(\tau^\kappa(u))}{T-u} 
      \underset{u \uparrow T}{\longrightarrow} 1.
\end{align*}
Thus, the above integral over $K^\Xi$ is bounded uniformly in 
$0 \leq u \leq T$  by some constant $c >0$, and so
$$
\norm{\zeta^{K^{\Xi}}}^2_{L^2(\PP \otimes dt)}
\leq c \int_0^T \EE[\zeta_u^2] du = c \, \norm{\zeta}_{L^2(\PP \otimes dt)}^2
$$
yielding the assertion in c).
\end{proof}


\bibliographystyle{plainnat} \bibliography{finance}

\end{document}